\newcommand{\naturals}{\mathbb{N}}
\newcommand{\reals}{\mathbb{R}}
\newcommand{\ignore}[1]{}
\def\ex{\textrm{E}}
\newcommand{\search}[1]{${#1}$-\textsc{PDLS}}
\begin{document}
\title{Overcoming Probabilistic Faults in Disoriented Linear Search 
\thanks{Research supported in part by NSERC and by the Fields Institute for Research in Mathematical Sciences.}
\thanks{
This is the full version of the paper with the same title which will appear in the proceedings of the 
30th International Colloquium on Structural Information and Communication Complexity (SIROCCO’23), June 6-9, 2023, in Alcala de Henares, Spain.
}
}
%
%
\author{Konstantinos Georgiou\inst{1} 
\and Nikos Giachoudis\inst{1}
\and Evangelos Kranakis\inst{2}\orcidID{0000-0002-8959-4428}
}
\authorrunning{K. Georgiou et al.}
%
\institute{
Department of Mathematics, Toronto Metropolitan University, Toronto, ON, Canada \\
\email{konstantinos@torontomu.ca},~~\email{ngiachou@gmail.com}
\and
School of Computer Science, Carleton University, Ottawa, ON, Canada \\
\email{kranakis@scs.carleton.ca}
}

\maketitle              
\begin{abstract}
We consider search by mobile agents for a hidden, idle target, placed on the infinite line. Feasible solutions are agent trajectories in which all agents reach the target sooner or later. 
A special feature of our problem is that 
the agents are $p$-faulty, meaning that every attempt to change direction is an independent Bernoulli trial with known probability $p$, where $p$ is the probability that a turn fails. 
We are looking for agent trajectories that minimize the worst-case expected termination time, relative to the distance of the hidden target to the origin (competitive analysis).
Hence, searching with one $0$-faulty agent is the celebrated linear search (cow-path) problem that admits optimal $9$ and $4.59112$ competitive ratios, with deterministic and randomized algorithms, respectively. 


First, we study linear search with one deterministic $p$-faulty agent, i.e., with no access to random oracles, $p\in (0,1/2)$. For this problem, we provide trajectories that leverage the probabilistic faults into an algorithmic advantage. 
Our strongest result pertains to a search algorithm (deterministic, aside from the adversarial probabilistic faults) which, as $p\to 0$, has optimal performance $4.59112+\epsilon$, up to the additive term $\epsilon$ that can be arbitrarily small. Additionally, it has performance less than $9$ for $p\leq 0.390388$. When $p\to 1/2$, our algorithm has performance $\Theta(1/(1-2p))$, which we also show is optimal up to a constant factor.

Second, we consider linear search with two $p$-faulty agents, $p\in (0,1/2)$, for which we provide three algorithms of different advantages, all with a bounded competitive ratio even as $p\rightarrow 1/2$. 
Indeed, for this problem, we show how the agents can simulate the trajectory of any $0$-faulty agent (deterministic or randomized), independently of the underlying communication model. As a result, searching with two agents allows for a solution with a competitive ratio of $9+\epsilon$ (which we show can be achieved with arbitrarily high concentration) or a competitive ratio of $4.59112+\epsilon$. Our final contribution is a novel algorithm for searching with two $p$-faulty agents that achieves a competitive ratio $3+4\sqrt{p(1-p)}$, with arbitrarily high concentration. 



\keywords{
Linear Search
\and
Probabilistic Faults
\and
Mobile Agents
}
\end{abstract}
%
%
%




\section{Introduction}

Linear search refers to the problem of searching for a point target which has been placed at an unknown location on the real line. The searcher is a mobile agent that can move with maximum speed $1$ and is starting the search at the origin of the real line. The goal is to find the target in minimum time. This search problem provides a paradigm for understanding the limits of exploring the real line and has significant applications in mathematics and theoretical computer science. 

In the present paper we are interested in linear search under a faulty agent which is disoriented in that when it attempts to change direction not only it may fail to do so but also cannot recognize that the direction of movement has changed.
More precisely, for some $0 \leq p \leq 1$, a successful turn occurs with probability $1-p$ but the agent will not be able to recognize this until it has visited an anchor, a known, preassigned  point, placed on the real line. Despite this faulty behaviour of the agent it is rather surprising that it is possible to design algorithms which outperform the well-known zig-zag algorithm whose competitive ratio is $9$.

\subsection{Related Work}
Search by a single agent on the real line was initiated independently by Bellman \cite{bellman1963optimal} and Beck \cite{beck1964linear,beck1965more,beck1970yet} almost 50 years ago; the authors prove the well known result that a single searcher whose max speed is $1$ cannot find a hidden target placed at an initial distance $d$ from the searcher in time less than $9d$. These papers gave rise to numerous variants of linear search. Baeza-Yates et. al.~\cite{baezayates1993searching,baezayates1995parallel} study search problems by agents in other environments, e.g. in the plane or starting at the origin of $w$ concurrent rays (also known as the ``Lost Cow'' problem). Group search was initiated in~\cite{chrobak2015group} where evacuation (a problem similar to search but one minimizing the time it takes for the last agent to reach the target) by multiple agents that can communicate face-to-face was studied. An extension to the problem, where one tries to minimize the weighted average of the evacuation times was studied in~\cite{GEORGIOU20231}. 
There is extensive literature on this topic and \cite{czyzowicz2019groupkos} provides a brief survey of more recent topics on search.

Linear search with multiple agents some of which may be faulty, Crash or Byzantine, was initiated in the work of~\cite{czyzowicz2019search}~and~\cite{czyzowicz2021search}, respectively. In this theme, one uses the power of communication in order to overcome the presence of faults.
For three agents one of which is Byzantine, \cite{Sun2020} shows that the proportional schedule presented in~\cite{czyzowicz2019search} can be analyzed to achieve an upper bound of $8.653055$. Recently, \cite{czyzowicz2021searchnew} gives a new class of algorithms for $n$ agents when the number of Byzantine faulty among them is near majority, and the best known upper bound of $7.437011$ on an infinite line for three agents one of which is Byzantine. 

The present paper focuses on probabilistic search. The work of Bellman \cite{bellman1963optimal} and Beck \cite{beck1964linear,beck1965more,beck1970yet}, also mentioned above, has probabilistic focus. In addition numerous themes on probabilistic models of linear search can be found in the book~\cite{alpern2006theory} of search games and rendezvous, as well as in ~\cite{ahlswede1987search,stone1975theory}.

Search which takes into account the agent's turning cost is the focus of ~\cite{alpern2006theory}[Section 8.4] as well as the paper \cite{demaine2006online}. Search with uncertain detection is studied in~\cite{alpern2006theory}[Section 8.6]. According to this model the searcher is not sure to find the target when reaching it; instead it is assumed that the probability the searcher will find it on its $k$-th visit is $p_k$, where $\sum_{k \geq 0} p_k = 1$. A particular case of this is search with geometric detection probability~\cite{alpern2006theory}[Section 8.6.2] in which the probability of finding the target in the $k$-th visit is $(1-p)^{k-1}p$.  \cite{mccabe1974searching} investigates searching for a one-dimensional random walker and \cite{baston2001rendezvous} is concerned with rendezvous search when marks are left at the starting points. 
In another result pertaining to different kind of probabilistic faults,\cite{bonato2021probabilistically} studies the problem on the half-line (or $1$-ray), where detecting the target exhibits faults, i.e. every visitation is an independent Bernoulli trial with a known probability of success $p$. 
Back to searching the infinite line, a randomized algorithm with competitive ratio $4.59112$ for the cow path problem can be found in~\cite{kao1996searching} and is also shown to be optimal. 
In a strong sense, the results in this work are direct extensions of the optimal solutions for deterministic search in~\cite{baezayates1993searching} and for randomized search in~\cite{kao1996searching}. 
To the best of our knowledge the linear search problem considered in our paper has never been investigated before.  
We formally define our problem in Section~\ref{sec: model and problem definition}. 
Then in Section~\ref{sec: contrinutions} we elaborate further on the relevance of our results to~\cite{baezayates1993searching} and~\cite{kao1996searching}.

\section{Model \& Problem Definition (\search{p})}
\label{sec: model and problem definition}

We introduce and study the so-called Probabilistically Disoriented Linear Search problem (or \search{p}, for short), associated with some probability $p$. We generalize the well studied linear search problem (also known as cow-path) where the searcher's trajectory decisions exhibit probabilistic faults. The value $p$ will quantify a notion of probabilistic failure (disorientation).  

In \search{p}, an agent (searcher) can move at unit speed on an infinite line, where any change of direction does not incur extra cost. 
On the line there are two points, \emph{distinguishable} from any other point. Those points are the \textit{origin}, i.e. the agent's starting location, and the \textit{target}, which is what the agent is searching for and which can be detected when the agent walks over it. 

The agents have a faulty behaviour. 
If the agent tries to change direction (even after stopping), then with known  probability $p$ the agent will fail and she will still move towards the same direction. Consequent attempts to change direction are independent Bernoulli trials with probability of success $1-p$. Moreover the agent is \emph{oblivious} to the result of each Bernoulli trial, i.e. the agent is not aware if it manages to change direction. We think of this probabilistic behaviour as a co-routine of the agent that fails to be executed with probability $p$.
 An agent which satisfies this property for a given $p$ is called $p$-faulty. 
Moreover we assume, for the sake of simplicity, that at the very beginning the $p$-faulty agent starts moving to a specific direction, without fault. 

The agent's faulty behaviour is compensated by that it can utilize the origin and the target to recover its perception of orientation. Indeed, suppose that the agent passes over the origin and after time 1 it decides to change direction. In additional time 1, the agent has either reached the origin, in which case it realizes it turned successfully, or it does not see the origin, in which case it detects that it failed to turn. We elaborate more on this idea later. 

A solution to 
\search{p} is given by the agent's trajectory (or agents' trajectories), i.e. instructions for the agent(s) to turn at specific times which may depend on previous observations (e.g. visitations of the origin and when they occurred). A \emph{feasible trajectory} is a trajectory in which every point on the infinite line is visited (sooner or later) with probability 1 (hence a $1$-faulty agent admits no feasible trajectory). 

For a point $x\in \reals$ (target) on the line, we define the termination cost of a feasible trajectory in terms of competitive analysis. Indeed, if $E(x)$ denotes the expected time that target $x$ is reached by the last agent (so by the only agent, if searching with one agent), where the expectation is over the probabilistic faults or even over algorithmic randomized choices, then the termination cost for target (input) $x$ is defined as $E(x)/|x|$. The competitive ratio of the feasible trajectory is defined then as $
\limsup_x  E(x)/|x|$.

For the sake of simplicity, our definition deviates from the standard definition of the competitive ratio for linear search in which the performance is defined as the supremum over $x$ with absolute value bounded by a constant $d$, usually $d=1$.
However it can be easily seen that the two measures differ by at most $\epsilon$, for any $\epsilon>0$ using a standard re-scaling trick (see for example~\cite{GEORGIOU20231}) that shows why the value of $d$ is not important, rather what is important is that $d$ is known to the algorithm. 

\paragraph{Specifications when searching with two agents:}

When searching with two $p$-faulty agents, the value of $p$ is common to both, as well as the probabilistic faults they exhibit are assumed to be independent Bernoulli trials. 
The search by two $p$-faulty agents can be done either in the wireless or the face-to-face model. In the former model, we assume that agents are able to exchange messages instantaneously, whereas in the face-to-face model messages can be exchanged only when the agents are co-located. 

In either communication model, we assume that the two agents can detect that (and when) they meet. As a result, we naturally assume that upon a meeting, a $p$-faulty agent can also act as a \emph{distinguished point} (same as the origin), hence helping the other agent to turn. Later, we will call an agent who facilitates the turn \emph{Follower}, and the agent who performs the turn \emph{Leader}.
As long as the agents have a way to resolve the two roles (which will be built in to our algorithms), we also assume that the Leader moving in any direction can ``pick up'' another faulty agent she meets so that the two continue moving in that direction. 
 This means that two agents meeting at a point can continue moving to the direction of the leader (with probability $1$) even if the non-leader is idle. This property is motivated by that, effectively, the leader does not change direction, and hence there is no risk to make a mistake. 
Finally we note that two $p$-faulty agents can move at speed at most 1, independently of each other, as well as any of them can slow down or even stay put, complying still with the faulty turn specifications.

\subsection{Notes on Algorithmic and Adversarial Randomness }

The algorithms (feasible trajectories) that we consider are either deterministic or randomized, independently of the randomness induced by the faultiness. In particular, the efficiency measure is defined in the same way, where any expectations are calculated over the underlying probability space (induced by the combination of probabilistic faults and the possible randomized algorithmic choices). 
Moreover, if the algorithm is randomized then an additional random mechanism is used that is independent of the faulty behaviour. For example, a randomized agent (algorithm) can choose a number between $0$ and $1$ uniformly at random. 
A $p$-faulty agent that has access to a random oracle will be called \emph{randomized}, and \emph{deterministic} otherwise (but both exhibit probabilistically failed turns).

It follows by our definitions that \search{0} with one deterministic agent is the celebrated linear search problem (cow-path) which admits a provably optimal trajectory of competitive ratio 9~\cite{baezayates1993searching}. 
In the other extreme, \search{1} does not admit a feasible solution, since the agent moves indefinitely along one direction. In a similar spirit we show next in Lemma~\ref{lem: unbounded cr} that the problem is meaningful only when $p<1/2$, see Section~\ref{sec: proof of lem: unbounded cr} for proof. 


\begin{lemma}
\label{lem: unbounded cr}
No trajectory for \search{p} has bounded competitive ratio when $p\geq 1/2$. 
\end{lemma}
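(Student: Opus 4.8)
I would prove the contrapositive to the desired regime: for $p\ge 1/2$, I will show that the expected time to reach a target on the side \emph{opposite} to the agent's initial heading is infinite, which forces $E(x)/|x|=\infty$ and hence an unbounded competitive ratio. Without loss of generality the agent starts by moving right, and I place the target at $x=-d$ for an arbitrary fixed $d>0$; it suffices to prove $E(-d)=\infty$ (feasibility still guarantees arrival with probability $1$, so the issue is purely the expected time). Throughout, let $s(t)\in\{-1,+1\}$ denote the agent's true, hidden heading and let $A(t)$ count the number of turn attempts made by time $t$; each attempt flips $s$ independently with probability $1-p\le 1/2$, and the agent learns nothing about the outcomes except through visits to the origin, which is its only landmark before the target is found (the case $p=1$ is degenerate: no turn ever succeeds, so the opposite side is unreachable).

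The core mechanism I would exploit is that the agent can become confident of its heading \emph{only} by using the origin as a reference, via the ``turn, walk distance $1$, check whether the origin reappears'' gadget described in the model. Each such confirmation is a Bernoulli$(1-p)$ trial physically embedded in a round trip: a failed turn leaves the agent heading outward, so to re-attempt and re-check it must travel out and back over a distance at least as large as before. Hence the cost of the $k$-th confirmation grows at least geometrically, while the event ``still not confirmed after $k$ trials'' has probability $p^{k}\ge 2^{-k}$; the expected cost is therefore bounded below by a series $\sum_k p^{k}\cdot(\text{geometrically growing distance})$ that diverges precisely when $p\ge 1/2$. To make this uniform over strategies I would use the fact that information is gained only at the origin: between consecutive origin visits the plan is predetermined, and along a run of consecutive failed turns the outward displacement is monotone, so no schedule of confirmations can beat the divergent series.

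The remaining difficulty, which I expect to be the main obstacle, is covering the strategies that try to reach $-d$ by many small probes \emph{without} ever confirming orientation, and handling the boundary $p=1/2$. For such a strategy the signed displacement $X(t)=\int_0^t s(u)\,\mathrm du$ behaves like a driftless (at $p=1/2$) or rightward-drifting (at $p>1/2$) walk, whose first passage to $-d$ has infinite expected time by null recurrence (resp.\ transience to the right); and under rightward drift, guaranteeing probability-$1$ arrival at all itself forces confirmations, returning to the previous case. A convenient way to package both regimes is the martingale $Y(t)=\rho^{A(t)}s(t)$ with $\rho=(2p-1)^{-1}\ge 1$, which satisfies $E[Y(t)]=1$ yet must be strictly negative at any time the agent sits at $-d$ (there $s=-1$ and $A\ge 1$); the resulting obstruction to optional stopping shows that the number of attempts, and hence the elapsed time, cannot be integrable. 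The genuinely hard part is that the agent \emph{can} legitimately bias its heading leftward by verifying at the origin, so no naive drift or (sub)martingale bound on $X(t)$ is available; the real work lies in lower-bounding the verification cost uniformly over all adaptive (and possibly randomized) trajectories, and in dealing with the integrability and degeneracy issues ($p=1/2$, and the unbounded increments of $Y$) rigorously.
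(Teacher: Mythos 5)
Your core argument (second paragraph) is exactly the paper's own proof: after a turn attempt at distance $d$ from the origin, a failed turn cannot be detected before the agent would have reached the origin, forcing its distance to double before the next confirmation, so the expected time to return to the origin (and hence to ever search the opposite side) is bounded below by $\sum_{k\ge 0}(1-p)\,p^k\,2^k d$, which diverges precisely when $p\ge 1/2$. The paper's proof is explicitly only this sketch; your surrounding machinery (null recurrence at $p=1/2$, the martingale $\rho^{A(t)}s(t)$) is aimed at adaptive strategies that probe without confirming at the origin, a case the paper's sketch silently ignores, and while you candidly leave those steps incomplete, this puts your proposal at --- if anything slightly above --- the paper's own level of rigor.
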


It is essential to note that in our model, the probabilistic faulty turns of a agent do hinder the control of the trajectory, but also introduce uncertainty of the algorithmic strategy for the adversary. 
As a result, the probabilistic movement of the agent (as long as $p>0$), even though it is not controlled by the algorithm, it can be interpreted as an algorithmic choice that is set to stone. 
Therefore, the negative result of~\cite{kao1996searching} implies the following lower bound for our problem.

\begin{corollary}
For any $p\in (0,1/2)$, no solution for \search{p} with one agent (deterministic or randomized) has competitive ratio lower than 4.59112.
\end{corollary}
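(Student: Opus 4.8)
The plan is to interpret any \search{p} solution as an ordinary randomized algorithm for the classical cow-path problem, so that the known randomized lower bound of $4.59112$ from~\cite{kao1996searching} transfers directly. The crucial observation, already foreshadowed before the statement, is that the probabilistic faults act as a source of algorithmic randomness whose distribution is fixed in advance: it does not depend on the target and cannot be manipulated by the adversary beyond the prescribed failure probability $p$.

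Concretely, I would fix an arbitrary feasible trajectory $A$ for \search{p}, deterministic or randomized, and consider the underlying probability space combining the Bernoulli faults (with $p\in(0,1/2)$) and any algorithmic coin flips. For each sample point $\omega$ in this space, the agent traces out a single deterministic path $\tau_\omega$ on the real line; this induces a probability distribution $\mathcal D$ over deterministic search paths. Since the target is idle and detected only upon contact, no information about its location is ever revealed before it is found, so $\mathcal D$ is a legitimate (oblivious) randomized strategy for searching the line: one simply samples $\tau\sim\mathcal D$ and walks along it. The disorientation of the \search{p} agent is irrelevant to this reduction, because we only read off the physical trajectory and not the agent's (mis)perception of its own orientation.

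Next I would verify that the two cost measures coincide. For every target $x\in\reals$, the time at which $A$ reaches $x$ under the fault-and-coin randomness is exactly the time at which the sampled path $\tau\sim\mathcal D$ reaches $x$; hence the expected hitting time $E(x)$, and therefore the competitive ratio $\limsup_x E(x)/|x|$, are identical for $A$ viewed as a \search{p} solution and for $\mathcal D$ viewed as a randomized cow-path strategy. Because~\cite{kao1996searching} proves that every randomized strategy for linear search has competitive ratio at least $4.59112$, the strategy $\mathcal D$, and hence $A$, inherits this bound. As $A$ was arbitrary, the claim follows.

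The step requiring the most care is not technical but conceptual: one must argue that $\mathcal D$ genuinely qualifies as a randomized strategy covered by the lower bound of~\cite{kao1996searching}. This turns out to be the easy direction, since that bound is universally quantified over \emph{all} randomized strategies and therefore applies verbatim to the single distribution $\mathcal D$; the restricted nature of $\mathcal D$ (a \search{p} agent cannot realize every distribution available to a free randomized searcher) can only raise the competitive ratio, never lower it. The one modelling fact that must be stated explicitly is that the fault randomness is independent of the target and frozen at probability $p$, which is exactly what prevents the adversary from tailoring $\mathcal D$ to a specific input and thus makes it a bona fide randomized algorithm.
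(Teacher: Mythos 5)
Your proposal is correct and takes essentially the same approach as the paper: the paper's justification is exactly that the fault randomness, being fixed at probability $p$ and independent of the target, ``can be interpreted as an algorithmic choice that is set to stone,'' so any \search{p} trajectory induces a randomized strategy for classical linear search and the negative result of~\cite{kao1996searching} applies verbatim. Your write-up simply makes explicit the induced distribution over deterministic paths and the coincidence of the two cost measures, which the paper leaves implicit.
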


\section{Contributions' Outline and Some Preliminary Results}
\label{sec: contrinutions}

Our main results in this work pertain to upper bounds for the competitive ratio that (one or two) faulty agents can achieve for \search{p}. 

\subsection{Results Outline for Searching with One Faulty Agent}

We start our exposition with search algorithms for one $p$-faulty agent. 
In Section~\ref{sec: deterministic algorithm} we analyze the performance of a deterministic search algorithm, whose performance is summarized in Theorem~\ref{thm: det algo limsup} on page~\pageref{thm: det algo limsup}. 
The section serves as a warm-up for the calculations we need for our main result when searching with one randomized $p$-faulty agent. Indeed, in Section~\ref{sec: randomized algorithm} we present a \emph{randomized} algorithm whose performance is summarized in Theorem~\ref{thm: rand algo limsup} on page~\pageref{thm: rand algo limsup}. The reader can see the involved formulas in the formal statements of the theorems, so here we summarize our results graphically in Figure~\ref{fig: summary of 1 agent}. Some important observations are in place. 

\begin{figure}
\begin{subfigure}[t]{0.45\textwidth}
\centering
\includegraphics[width =.9\textwidth ]{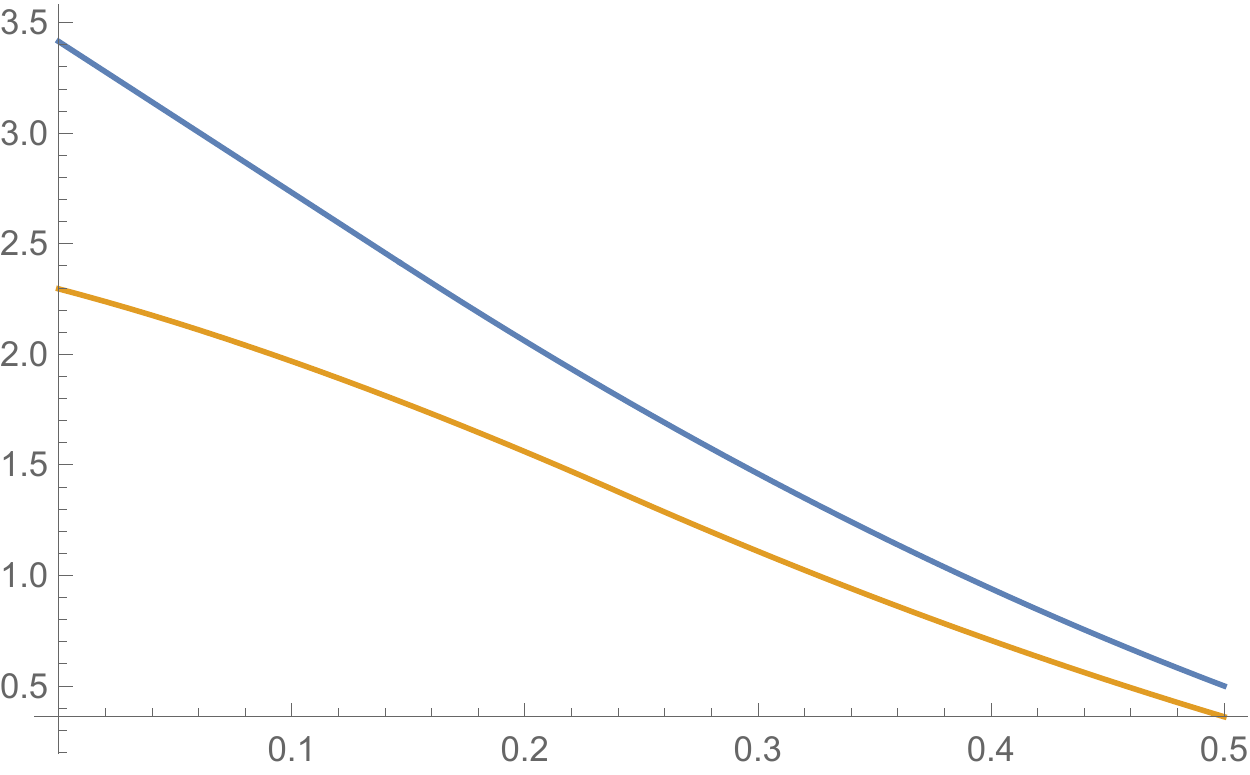}
\caption{
Competitive ratio comparison between the deterministic algorithm of Theorem~\ref{thm: det algo limsup} in blue and the randomized algorithm of Theorem~\ref{thm: rand algo limsup} in yellow. The competitive ratios are scaled by $(1/2-p)$.
}
\label{fig: DetRanCR}
\end{subfigure}\hfill
\begin{subfigure}[t]{0.45\textwidth }
\centering
\includegraphics[width =.9\textwidth]{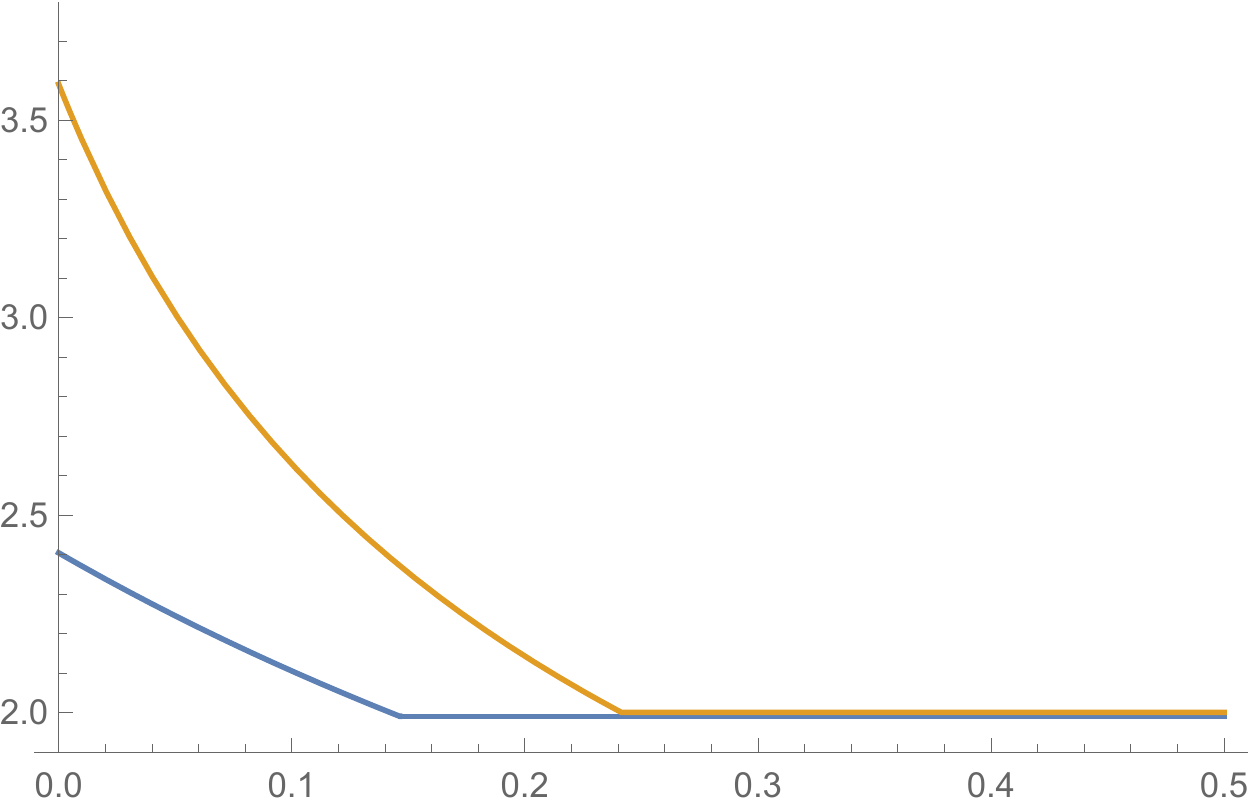}
\caption{
Intended expansion factors of the zig-zag algorithms of the deterministic algorithm of Theorem~\ref{thm: det algo limsup} in blue and of the randomized algorithm of Theorem~\ref{thm: rand algo limsup} in yellow. 
}
\label{fig: DetRanEF}
\end{subfigure}
\caption{Graphical summary of the positive results pertaining to searching with one $p$-faulty agent, $p\in (0,1/2)$.}
\label{fig: summary of 1 agent}
\end{figure}

\emph{Comments on the expansion factors:}
We emphasize that both our algorithms are adaptations of the standard zig-zag algorithms of~\cite{baezayates1993searching} and~\cite{kao1996searching} which are optimal for the deterministic and randomized model, respectively, when searching with a $0$-faulty agent. 
The zig-zag algorithms are parameterized by the so-called \emph{expansion factor} $g$ that quantifies the rate by which the searched space is increased in each iteration that the searcher changes direction. In our case however, we are searching with $p$-faulty agents, where $p>0$, and as a result, the algorithmic choice for how the searched space expands cannot be fully controlled (since turns are subject to probabilistic faults). This is the reason that, in our case, the analyses of these algorithms are highly technical, which is also one of our contributions. 

On a relevant note, the optimal expansion factor for the optimal deterministic $0$-faulty agent is 2, while the expansion factors $g=g(p)$ we use for the deterministic $p$-faulty agent, $p\in (0,1/2)$, are decreasing in $p$. As $p\rightarrow 0$, we use expansion factor $1+\sqrt{2}$, and the expansion factor drops to 2, for all 
$p\geq 0.146447$. 
When it comes to our randomized algorithm, the chosen expansion factor is again decreasing in $p$, starting from the same choice as for the optimal randomized $0$-faulty agent of~\cite{kao1996searching}, and being equal to $2$ for all 
$p\geq 0.241516$.\footnote{For simplicity, we only give numerical bounds on $p$. All mentioned bounds of $p$ in this section have explicit algebraic representations that will be discussed later.}
The expansion factors are depicted in Figure~\ref{fig: DetRanEF}.

\emph{Comments on the established competitive ratios:} 
By the proof of Lemma~\ref{lem: unbounded cr} it follows that as $p\to 1/2$, the optimal competitive ratio for \search{p}
is of order $\Omega(1/(1-2p)$. 
Hence for the sake of better exposition, we depict in Figure~\ref{fig: DetRanCR} the established competitive ratios scaled by $1/2-p$. Moreover, the results are optimal up to a constant factor when $p\to 1/2$. 

It is also interesting to note that for small enough values of $p$, the established competitive ratios are better than the celebrated optimal competitive ratio $9$
for \search{0}. 
This is because our algorithms leverage the probabilistic faults to their advantage, making the adversarial choices weaker. Indeed, algebraic calculations show that the competitive ratios of the deterministic and the randomized algorithms are less than 9 when 
$p\leq 0.390388$
and when
$p\leq 0.436185$,
respectively. 
Moreover, when searching with one $p$-faulty agent and $p\rightarrow 0$, 
the competitive ratio of our deterministic algorithm tends to $6.82843 <9$ and of our randomized algorithm to the provably optimal competitive ratio of $\frac{1}{W\left(\frac{1}{e}\right)}+1\approx 4.59112$, where $W(\cdot)$ is the  Lambert W-Function.
It is important to also note that for deterministic algorithms only, there is no continuity at $p=0$, since when the agent exhibits no faults, the adversary has certainty over the chosen trajectory and hence is strictly more powerful. 

Values of $p$ close to $1/2$ give rise to interesting observations too. Indeed, it is easy to see in Figure~\ref{fig: DetRanCR} that the derived comperitive ratios are of order
$\Theta\left( 1/(1-2p) \right)$.
More interestingly, the difference of the established competitive ratios of the deterministic and the randomized algorithms is $\Theta\left( 1/(1-2p) \right)$ too, when $p\rightarrow 1/2$. Hence the improvement when utilizing controlled (algorithmic) randomness is significant. We ask the following critical question: 
\begin{center}
\textit{``Does access to a random oracle provide an advantage for \search{p} with one agent?''}
\end{center}
Somehow surprisingly, we answer the question in the \emph{negative}!
More specifically, we show, for all $p \in (0,1/2)$ and using the probabilistic faults in our advantage, how a deterministic $p$-faulty agent (deterministic algorithm) can simulate a randomized $p$-faulty agent (randomized algorithm). Hence our improved upper bound of Theorem~\ref{thm: rand algo limsup} which is achieved by a randomized algorithm can be actually simulated by a deterministic $p$-faulty agent. The proof of this claim relies on that our randomized algorithm assumes access to a randomized oracle that samples only from uniform distributions a finite number of times (in fact only 2). 
The main idea is that a deterministic agent can stay arbitrarily close to the origin, sampling arbitrarily many random bits using it's faulty turns, allowing her to simulate queries to a random oracle. 
The details of the proof of the next theorem appear in Section~\ref{sec: proof of thm: deterministic simulates randomized}.

\begin{theorem}
\label{thm: deterministic simulates randomized}
For any $p \in (0,1/2)$, let $c$ be the competitive ratio achieved by a randomized faulty agent for \search{p}, having access (finite many times) to an oracle sampling from the uniform distribution. Then for every $\epsilon >0$, there is a deterministic $p$-faulty agent for the same problem with competitive ratio at most $c+\epsilon$. 
\end{theorem}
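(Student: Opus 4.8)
The plan is to have the deterministic agent run a short ``randomness-extraction'' phase in a bounded neighbourhood of the origin, during which it manufactures (approximate) samples from the uniform distribution out of its own faulty turns, and then execute the randomized algorithm using these manufactured samples. Because the randomized algorithm queries the oracle only finitely many times (in fact twice), and because the whole extraction phase contributes only a bounded additive term to the expected termination time, this additive cost is washed out by the normalisation $E(x)/|x|$ as $|x|\to\infty$, so the deterministic agent inherits essentially the same competitive ratio. The two technical ingredients are (i) a gadget that turns faulty turns into detectable fair coin flips at finite expected cost, and (ii) a robustness argument that finite precision suffices.

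First I would build the basic coin-flipping gadget. Suppose the agent sits at the origin moving outward and, after travelling a distance $\delta$, attempts a turn; it then travels a further distance $\delta$ while watching for the origin. If the turn succeeded (probability $1-p$) it arrives back at the origin; if it failed (probability $p$) it is now at distance $2\delta$, having seen nothing. Either way the agent \emph{learns} the outcome of the Bernoulli trial, since reaching (or failing to reach) the distinguished origin reveals whether the turn took effect --- exactly the anchoring mechanism of the model. This yields a detectable $\mathrm{Bernoulli}(p)$ bit. To return to the origin after a failure at distance $m\delta$, the agent repeatedly attempts a turn and then travels the current distance-to-origin $m\delta$ (monitoring for the origin en route); each failure doubles the current distance, so the total return travel is the geometric series $\sum_{j\ge0}(2p)^j\, m\delta=m\delta/(1-2p)$, which converges \emph{precisely because} $p<1/2$. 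Hence each Bernoulli bit, together with the reset to the origin, costs finite expected travel.

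Next I would debias and assemble the samples. Applying von Neumann's trick to pairs of these $\mathrm{Bernoulli}(p)$ bits produces perfectly fair bits (a usable pair occurs with probability $2p(1-p)>0$ for $p\in(0,1/2)$, so the procedure terminates with finite expected number of trials), and concatenating $k$ fair bits yields a value exactly uniform on the dyadic grid $\{0,2^{-k},\dots,1-2^{-k}\}$, a $2^{-k}$-accurate proxy for a uniform sample. Producing the two samples the algorithm needs therefore costs a total expected travel $A=A(k,\delta,p)$ that is finite and, crucially, independent of the hidden target $x$; afterwards the agent returns to the origin in a known orientation and launches the discretised randomized trajectory.

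The remaining, and main, obstacle is to argue that finitely many bits suffice, i.e.\ that running the algorithm with grid-uniform samples $\tilde U$ rather than true uniform samples $U$ costs almost nothing in the competitive ratio. Letting $c_k$ denote the competitive ratio obtained with $k$-bit samples, I would show $c_k\to c$. For the expansion-factor (zig-zag) algorithms at hand, for each fixed $x$ the expected termination time $U\mapsto T(x,U)$ is bounded and continuous at almost every sample value (it jumps only on the measure-zero set where an iteration boundary lands on $x$); since the grid-uniform law converges weakly to the uniform law, bounded convergence gives $E_{\tilde U}[T(x,\tilde U)]\to E(x)$ for each fixed $x$, and the log-periodicity of $E(x)/|x|$ in $\log|x|$ inherent to geometrically expanding schedules upgrades this pointwise statement to convergence of $\limsup_x$. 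Granting $c_k\to c$, fix $k$ with $c_k\le c+\epsilon/2$. By linearity the deterministic agent's expected cost for target $x$ is at most $A+O(\delta)+E_{\tilde U}[T(x,\tilde U)]$, so
\[
\limsup_{x}\frac{A+O(\delta)+E_{\tilde U}[T(x,\tilde U)]}{|x|}
=\limsup_{x}\frac{E_{\tilde U}[T(x,\tilde U)]}{|x|}=c_k\le c+\tfrac{\epsilon}{2}<c+\epsilon ,
\]
since $A+O(\delta)$ is a constant independent of $x$. I expect the delicate point to be exactly this last continuity-and-uniformity argument; the gadget and the cost bounds are routine once $p<1/2$ is used to tame the return walk and von Neumann's termination.
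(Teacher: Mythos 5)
Your proposal follows the same overall route as the paper's proof: harvest random bits from the faulty turns via short origin-anchored excursions (whose outcomes are detectable precisely because the origin is a distinguished point), pay only a bounded additive cost that the normalisation $E(x)/|x|$ washes out, and then execute the randomized algorithm on the manufactured samples. Within that skeleton, two of your key ingredients genuinely differ from the paper's. First, to fix the bias you apply von Neumann's trick to pairs of detected $\mathrm{Bernoulli}(p)$ trials, obtaining \emph{exactly} fair bits; the paper instead proves a probability-amplification lemma (repeated in-place turn attempts, giving bias $q_k=\frac{1}{2}\left((2p-1)^k+1\right)\to 1/2$) and works with \emph{nearly} fair bits. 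Your choice removes one layer of approximation error; the paper's avoids conditioning on accepted pairs but must carry the residual bias through the final estimate. Second, for returning to the origin you bound the expected cost of the doubling walk by $m\delta/(1-2p)$ and let this constant wash out in the $\limsup$; the paper instead caps the number of return attempts at $l$ and hedges the failure event (probability $p^l$) by falling back to the deterministic algorithm of Theorem~\ref{thm: det algo limsup}, yielding competitive ratio $(1-p^l)c+p^l c^{\textsc{det}}$. Your version is cleaner and equally valid, since accidentally finding the target during an unlucky excursion only helps. On the one genuinely delicate point --- that finitely many bits (a grid-uniform sample) perturb the competitive ratio by at most $\epsilon$ --- you and the paper are in the same position: the paper simply asserts simulation ``with arbitrary precision,'' while you sketch an a.e.-continuity and bounded-convergence argument; note, however, that your sketch is tailored to the zig-zag algorithms, whereas the theorem is stated for an arbitrary randomized algorithm with finitely many uniform queries, so strictly speaking your argument (like the paper's own) establishes the statement only under this robustness property of the simulated algorithm.
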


\subsection{Results Outline for Searching with Two Faulty Agents}

We conclude our contributions in Section~\ref{sec: two agents} where we study 
\search{p} with two agents. 
First we show how two (deterministic) $p$-faulty agents (independently of the underlying communication model) can simulate the trajectory of any (one) $0$-faulty agent. 
As an immediate corollary, we derive in Theorem~\ref{thm: 2 agents 9 competitive} on page~\pageref{thm: 2 agents 9 competitive} a method for finding the target with two $p$-faulty agents that has competitive ratio $9+\epsilon$, for every $\epsilon>0$. Most importantly as long as $p>0$, the result holds not only in expectation, but with arbitrarily large concentration. 

Motivated by similar ideas, we show in Theorem~\ref{thm: 2 agents 4.59112 competitive} on page~\pageref{thm: 2 agents 4.59112 competitive} how two deterministic $p$-faulty agents can simulate the celebrated optimal randomized algorithm for one agent for \search{0}, achieving competitive ratio arbitrarily close to $4.59112$.
The result holds again regardless of the communication model. 
However, in this case we cannot guarantee a similar concentration property as before. 

Finally, we study the problem of searching with two \emph{wireless} $p$-faulty agents. Here we are able to show in Theorem~\ref{thm: 2 agents improved competitive} on page~\pageref{thm: 2 agents improved competitive} how both  agents can reach any target with competitive ratio $3+4\sqrt{p(1-p)}$, in expectation. The performance is increasing in $p<1/2$, ranging from $3$ (the optimal competitive ratio when searching with two wireless $0$-fault agents) to $5$.
However, as before we can control the concentration of the performance, making it arbitrarily close to $3+4\sqrt{p(1-p)}$ with arbitrary confidence. Hence, this gives an advantage over Theorem~\ref{thm: 2 agents 4.59112 competitive} for small values of $p$, i.e. when the derived competitive ratio is smaller than $4.59112$. In this direction, we show that when $p> 0.197063$ the competitive ratio exceeds $4.59112$, and hence each of the results described above are powerful in their own right.

\section{Searching with One Deterministic $p$-Faulty Agent} 
\label{sec: deterministic algorithm}

We start by describing a deterministic algorithm for searching with a $p$-faulty agent, where $p\in (0,1/2)$. Our main result in this section reads as follows. 
\begin{theorem}
\label{thm: det algo limsup}
\search{p} with one agent 
admits a deterministic algorithm with competitive ratio equal to $2(\sqrt2+2)$ when $0<p\leq \frac14(2-\sqrt2)$, and equal to $6-4p+\frac{1}{1-2p}$ when $\frac14(2-\sqrt2) <p<1/2$.
\end{theorem}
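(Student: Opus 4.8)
The plan is to analyze the natural \emph{faulty zig-zag} with a fixed \emph{expansion factor} $g=g(p)>1$: the agent alternates sides and, on its $j$-th excursion, intends to turn around once it is $g$ times as far from the origin as the farthest point it has already reached on the opposite side. The one delicate ingredient is executing and \emph{certifying} a turn with a $p$-faulty agent, for which I would use the origin as an anchor exactly as sketched after the problem definition. Having walked out to distance $x$, the agent attempts to turn and then moves for time $x$: if it meets the origin the turn succeeded, and if it does not (so it now sits at distance $2x$) it has certified a failed turn and repeats the probe from $2x$. Thus each failed turn \emph{doubles} the distance from the origin, so if $F_j$ is the number of failed turns on excursion $j$ (geometric, $\mathrm{Pr}[F_j=m]=p^{m}(1-p)$) then the realized reach obeys $A_j=g\,2^{F_j}A_{j-1}$ and the excursion costs exactly $2A_j$ of travel (out to $A_j$ and back through the origin). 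The agent learns $F_j$ online, so the trajectory is well defined and the only randomness is the faults.

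For a target at distance $D$, the agent catches it the first time an excursion \emph{on the target's side} reaches $D$, spending exactly time $D$ in that excursion. Setting $e_j=\log_2 A_j$, the reach exponents form a renewal process with i.i.d.\ increments $\log_2 g+F_j$, and if $k$ is the first same-side excursion with $e_k\ge\log_2 D$ then $E(D)=D+2\,E\!\left[\sum_{j<k}A_j\right]$, whence the competitive ratio is $\limsup_{D}E(D)/D=\limsup_D\left(1+2\,E\!\left[\sum_{j<k}2^{\,e_j-\log_2 D}\right]\right)$. I would evaluate this through the stationary age/overshoot law of the renewal process; note that the sums converge only because $E[2^{F_j}]=\frac{1-p}{1-2p}<\infty$, i.e.\ $p<1/2$, in agreement with Lemma~\ref{lem: unbounded cr}.

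The decisive phenomenon is the \emph{parity} of the excursion that first crosses level $\log_2 D$. The sum $\sum_{j<k}A_j$ collects all reaches below $D$ plus at most one wrong-side \emph{overshoot} term (a wrong-side excursion whose reach already exceeds $D$). Because the accumulated faults over the $\Theta(\log D)$ excursions preceding $k$ randomize which side crosses level $\log_2 D$, this parity is asymptotically uniform, so the overshoot term is present with probability only $1/2$; this averaging is exactly what lets a $p$-faulty agent beat the deterministic bound $9$, and it also explains the discontinuity at $p=0$, where, with no faults, the adversary regains control of the parity. For the clean choice $g=2$ the exponents live on the integer lattice and each level is a reach independently with probability $1-p$; the worst-phase bookkeeping then gives $\mathrm{CR}(2,p)=1+4(1-p)+2\cdot\frac{1-p}{1-2p}=6-4p+\frac{1}{1-2p}$. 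For general $g$ the relevant worst-phase expression reduces to $1+g+\frac{2g}{g-1}$ (the flat regime), minimized at $g=1+\sqrt2$ with value $2(\sqrt2+2)$.

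Finally I would optimize $g=g(p)$. The competitive ratio is a maximum over target phases of such expressions, and minimizing this maximum over $g$ yields an expansion factor that decreases from $1+\sqrt2$ (as $p\to0$) down to $2$, keeping the value pinned at $2(\sqrt2+2)$ until the two competing expressions coincide, after which $g=2$ is optimal with value $6-4p+\frac{1}{1-2p}$. A short calculation places the switch at the minimizer over $p$ of $6-4p+\frac{1}{1-2p}$, namely $p=\frac14(2-\sqrt2)$, where continuity holds since there $6-4p+\frac1{1-2p}=4+\sqrt2+\sqrt2=2(\sqrt2+2)$. I expect the main obstacle to be precisely this renewal computation: locating the worst phase of $\limsup_D E(D)/D$, controlling the non-lattice overshoot law when $g\ne 2$, and turning the parity-averaging heuristic into a rigorous $D\to\infty$ statement; the ensuing $\min_g\max_{\text{phase}}$ optimization that pins the threshold is then a delicate but finite calculus exercise.
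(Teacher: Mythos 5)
Your $g=2$ branch is sound: for $g=2$ your retry rule (double the distance after a failed turn) coincides with the paper's rule (retry at the next power of $g$), and your value $1+4(1-p)+2\cdot\frac{1-p}{1-2p}=6-4p+\frac{1}{1-2p}$ agrees with the paper's exact formula. The genuine gap is the branch $0<p\leq\frac14(2-\sqrt2)$. Your ``worst-phase expression'' for general $g$, namely $1+g+\frac{2g}{g-1}$, contains no $p$ at all: it is only the $p\to 0$ limit, obtained by parity-averaging while ignoring the expected overshoot cost of failed turns. The correct expression (Lemma~\ref{lem: cr deterministic in interval} after letting $t\to\infty$) is
$$
f_p^{\textsc{det}}(g)=1+\frac{(1-p)g}{1-gp}\cdot\frac{1+(1-2p)g}{g-1},
$$
which is only valid for $g<1/p$ (a constraint absent from your analysis), and at your proposed optimizer $g=1+\sqrt2$ it is strictly above $2(\sqrt2+2)$ for $p>0$ (for instance $\approx 6.937$ at $p=0.1$, versus $2(\sqrt2+2)\approx 6.828$), so the fixed choice $g=1+\sqrt2$ does not prove the theorem. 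Your $p$-free formula is also inconsistent with your own conclusions: if it were correct, then $g=1+\sqrt2$ would give a bounded competitive ratio $2(\sqrt2+2)$ for \emph{all} $p<1/2$, contradicting the theorem's second branch and the $\Omega(1/(1-2p))$ bound of Lemma~\ref{lem: unbounded cr}, and there would be no regime switch and no reason for the optimizer to decrease toward $2$. The entire content of the first branch is the exact algebraic fact that $\min_{2\leq g<1/p} f_p^{\textsc{det}}(g)$ equals $2(\sqrt2+2)$ for every $p\leq\frac14(2-\sqrt2)$, attained at the $p$-dependent optimizer $g_0(p)=\frac{(\sqrt2+1)-(\sqrt2+2)p}{1-2p^2}$; you assert this ``pinning'' but possess no formula from which it could be derived.

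There is also a structural issue that makes your plan harder, not easier, to complete. The paper retries a failed turn at the next point of the fixed lattice $\{g^i\}$ (this is exactly why it requires $g\geq2$), so all turning points stay on that lattice, the only randomness is which side each leg is on, and the expected times $R_k,L_k$ satisfy a finite linear system solved exactly by inverting an explicit matrix (Section~\ref{sec: proof of lem: cr deterministic in interval}). Your doubling rule instead gives reaches $A_j=g\,2^{F_j}A_{j-1}$ whose logarithms leave every lattice when $\log_2 g\notin\mathbb{Q}$; note moreover that the fractional part of $\log_2 A_j$ equals the deterministic $\{j\log_2 g+\log_2 A_0\}$, so the adversary can track the phase of your renewal process and the ``asymptotically uniform phase'' heuristic does not apply as stated. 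That deferred renewal/overshoot computation is precisely where the missing $p$-dependence lives. To repair the argument: adopt the lattice retry rule, set up and solve the recurrences to obtain $f_p^{\textsc{det}}(g)$, then perform the constrained minimization over $g\in[2,1/p)$; the threshold $\frac14(2-\sqrt2)$ emerges as the point where $g_0(p)$ hits the constraint $g\geq2$ (your observation that it is also the minimizer of $p\mapsto 6-4p+\frac{1}{1-2p}$ is true, but it is a consequence of the pinning, not a proof of it).
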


We emphasize that having $p>0$ will be essential in our algorithm. This is because the probabilistic faults introduce uncertainty for the adversary. For this reason, it is interesting but not surprising that we can in fact have competitive ratio $2(\sqrt2+2)\approx 6.82843 <9$ for all $p<\frac{1}{8} \left(\sqrt{17}-1\right)\approx 0.390388$. 

First we give a verbose description of our algorithm, that takes as input $p\in(0,1)$, and chooses parameter $g=g(p)$ which will be the intended expansion rate of the searched space. 
In each iteration of the algorithm, the agent will be passing from the origin with the intention to expand up to $g^i$ in a certain direction. When distance $g^i$ is covered, the agent attempts to return to the origin. After additional time $g^i$ the agent knows if the origin is reached, in which case she expands in the opposite direction with intended expansion $g^{i+1}$. If not, the agent knows she did not manage to turn, and proceeds up to point $g^{i+1}$ (this is why we require that $g\geq 2$). Then, she makes another attempt to turn. This continues up to the event that the agent succeeds in turning at $g^j$, for some $j>i$, and then the agent attempts to expand in the opposite direction with intended expansion length $g^{j+1}$. 
The algorithm starts by searching towards an arbitrary direction, say right, with intended expansion $g^0$. 
Llater on, it will become clear that the termination time of the algorithm converges only if $g<1/p$. 

In order to simplify the exposition (and avoid repetitions), we introduce first subroutine 
Algorithm~\ref{algo: baseline search} which is the baseline of both algorithms we present for searching with one agent. Here, this is followed by Algorithm~\ref{algo: deterministic} which is our first search algorithm. 

\begin{algorithm}
    \caption{Baseline Search} 
    \label{algo: baseline search}
    \begin{algorithmic}[1]
        \REQUIRE $g,p,i,d$
        \REPEAT
            \REPEAT
                \STATE Move towards point $dg^i$ with speed $1$
            \UNTIL{The target is found $\vee$ $g^i$ time has passed}
            \WHILE{The origin is $\neg$found $\wedge$ the target is $\neg$found}
                \STATE Set $d \leftarrow -d$ \COMMENT{This changes the direction but it can fail with probability $p$}
                \STATE Set $i \leftarrow i+1$
                \REPEAT
                    \STATE Move towards point $dg^i$ with speed $1$
                \UNTIL{Either the target is found or $g^i-g^{i-1}$ time has passed or the origin is found}
            \ENDWHILE
        \UNTIL{The target is found}
    \end{algorithmic}
\end{algorithm}

\begin{algorithm}
    \caption{Faulty Deterministic Linear Search} 
    \label{algo: deterministic}
    \begin{algorithmic}[1]
        \REQUIRE $2\leq g \leq 1/p$ and $0 < p < 1/2$
        \STATE Set $i \leftarrow 0$ and $d \leftarrow 1$ \COMMENT{$d\in\{1,-1\}$ represents direction ($1$ going right and $-1$ going left)}
	\STATE Run Algorithm~\ref{algo: baseline search} with parameters $g,p,i,d$.
    \end{algorithmic}
\end{algorithm}

As we explain momentarily, Theorem~\ref{thm: det algo limsup} follows directly by the following technical lemma, whose proof appears in Section~\ref{sec: proof of lem: cr deterministic in interval}. 

\begin{lemma}
\label{lem: cr deterministic in interval}
Fix $p\in (0,1/2)$ and $g\in [2,1/p)$. If the target is placed in $(g^t, g^{t+1}]$, 
then the competitive ratio of Algorithm~\ref{algo: deterministic} is at most 
$$
\frac{
(1 - p) g 
}
{(1-g p)}
\left(
\frac{
1+
(1-2 p) g
}
{
g-1
}
+
\left(1-2 p\right)^{t+1}
\right)
+1.
$$
\end{lemma}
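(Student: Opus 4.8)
The plan is to translate the faulty zig-zag into a clean renewal-type process and to compute the expected travelled distance, which equals the expected termination time since the speed is $1$. I would index the turn attempts by $i=0,1,2,\dots$, so that the $i$-th attempt is made at signed distance $\pm g^{i}$ from the origin and succeeds independently with probability $1-p$; let $X_i\in\{0,1\}$ be the success indicators, which are i.i.d. Bernoulli$(1-p)$. Whether the $i$-th attempt is made on the right or the left is governed by the parity $P_i=\big(\sum_{j<i}X_j\big)\bmod 2$ of the successes so far, the agent searching right iff $P_i=0$ (so $P_0=0$). Writing $S_k$ for the index of the $k$-th successful turn, the $k$-th excursion reaches the extreme $g^{S_k}$ and then crosses the origin to start the $(k+1)$-st excursion on the opposite side. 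The gaps $G_k:=S_k-S_{k-1}-1$ (with $S_0:=-1$) are i.i.d. geometric, $\Pr[G_k=m]=p^m(1-p)$, and the crucial constant is $\gamma:=E[g^{G_k}]=\frac{1-p}{1-gp}$, finite precisely because $g<1/p$ — this is exactly where the hypothesis $g\in[2,1/p)$ enters and guarantees convergence of the overshoot series.

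Next I would pin down the cost. Assume without loss of generality that the target is on the right; the left case is symmetric with the parities swapped. Because $g\ge 2$, successive failed turns push the agent monotonically outward, so it detects the target precisely during the first right excursion that reaches index $\ge t+1$, i.e. the first \emph{odd} $k$ (right excursions are the odd-indexed ones) with $S_k\ge t+1$; call this $k^{\ast}$. Adding up the lengths of the completed round-trip excursions $1,\dots,k^{\ast}-1$ and the closing leg to the target yields the identity $D=2\sum_{k<k^{\ast}}g^{S_k}+|x|$, whence $E[D]/|x|=\mathrm{OH}(t)/|x|+1$ with $\mathrm{OH}(t):=2\,E\big[\sum_{k<k^{\ast}}g^{S_k}\big]$; the trailing $+1$ in the statement is exactly this closing leg of length $|x|$.

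The crux is the computation of $\mathrm{OH}(t)$, and the obstacle is that $k^{\ast}$ is coupled to the overshoot lengths $g^{S_k}$, so the sum cannot be split naively. I would decouple it in two steps. First, since $\{k<k^{\ast}\}=\{S_{j}\le t\}$ for the largest odd $j\le k$, and since $S_k=S_{k-1}+1+G_k$ with $G_k$ independent of the past, the even-indexed terms collapse onto the odd ones and one obtains $E\big[\sum_{k<k^{\ast}}g^{S_k}\big]=(1+g\gamma)\sum_{j\ \mathrm{odd}}E\big[g^{S_j}\mathbf{1}[S_j\le t]\big]$. Second — and this is the key reindexing — an odd-ranked success at index $i$ is exactly the event $\{X_i=1,\ P_i=0\}$, so the excursion-indexed, correlated sum becomes an attempt-indexed sum over independent trials: $\sum_{j\ \mathrm{odd}}E[g^{S_j}\mathbf{1}[S_j\le t]]=\sum_{i=0}^{t}g^i(1-p)\Pr[P_i=0]$. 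Inserting $\Pr[P_i=0]=\tfrac12\big(1+(-(1-2p))^{i}\big)$ and summing the two resulting geometric series (one in $g$, one in $-g(1-2p)$) gives a closed form for $\mathrm{OH}(t)$.

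Finally I would simplify $1+g\gamma=\frac{1+g(1-2p)}{1-gp}$, divide by $|x|>g^{t}$, bound $(-1)^{t}\le 1$ in the alternating series, and discard the remaining strictly negative $O(g^{-t})$ remainder; this collapses everything to $\frac{(1-p)g}{1-gp}\big(\frac{1+(1-2p)g}{g-1}+(1-2p)^{t+1}\big)+1$, as claimed, and the same pessimistic $(1-2p)^{t+1}$ bound absorbs the symmetric left-target case. The main difficulty lies entirely in the third paragraph: correctly identifying $k^{\ast}$ through the parity structure and performing the reindexing that turns the correlated excursion sum into a sum of independent Bernoulli contributions; once that is done, the rest is a routine (if delicate) manipulation of two geometric series whose convergence is exactly the role played by the assumption $g<1/p$.
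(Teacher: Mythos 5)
Your proof is correct, and it arrives at exactly the paper's closed form by a genuinely different route. The paper conditions on the next intended expansion, introduces $R_k$ and $L_k$ (expected termination time when the upcoming excursion goes right, resp.\ left, to intended extent $g^k$), and solves the resulting coupled recurrences as a $2(t+1)\times 2(t+1)$ linear system; this requires explicitly inverting a structured block matrix (Lemmas~\ref{lem: inverse of block} and~\ref{lem: inverse of I-P2}, the latter proved by induction, plus Corollary~\ref{cor: inverse of I-P2 times B}) before $R_0$ can be assembled and bounded. You instead compute the same expectation directly from the i.i.d.\ trial structure: the parity law $\Pr[P_i=0]=\tfrac12\left(1+(2p-1)^i\right)$, the geometric-gap constant $E[g^{G_k}]=\frac{1-p}{1-gp}$ (finite exactly because $g<1/p$), the collapse of even-indexed excursions onto odd ones via $S_k=S_{k-1}+1+G_k$, and the reindexing of the correlated excursion sum into the attempt-indexed sum $(1-p)\sum_{i=0}^{t}g^i\Pr[P_i=0]$. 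I checked that summing your two geometric series and using $1+g\,\frac{1-p}{1-gp}=\frac{1+g(1-2p)}{1-gp}$ reproduces the paper's expression~\eqref{eq:r zero} for $R_0$ term by term, and that your final bounding steps (discarding the strictly negative remainder, which equals $-\frac{2(1-p)}{(g-1)g^t}$, and bounding the alternating sign by $+1$) are precisely the ones the paper performs. Your route is more elementary and self-contained, it makes visible exactly where the hypotheses $g<1/p$ and $g\ge 2$ enter, and it explicitly handles the left-target case (which the paper leaves implicit), correctly observing that only the sign of the alternating term flips there, so the same pessimistic bound covers it. What the paper's heavier machinery buys is reuse: the identical linear system and its explicit inverse are invoked verbatim in the proof of Lemma~\ref{lem: cr randomized in interval} for the randomized algorithm, whereas your renewal computation would have to be redone there with the additional uniform shift $\epsilon$ in the turning points.
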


Taking the limit when $t\to \infty$ of the expression of Lemma~\ref{lem: cr deterministic in interval} shows that the competitive ratio of Algorithm~\ref{algo: deterministic} is 
$$
f_p^{\textsc{det}}(g):=\frac{1-g (2 g (p-2) p+g+2)}{(1-g) (1-g p)}
$$
for all $p,g$ complying with the premise.\footnote{If one wants to use the original definition of the competitive ratio, then by properly re-scaling the searched space (just by scaling the intended turning points), one can achieve competitive ratio which is additively off by at most $\epsilon$ from the achieved value, for any $\epsilon>0$.} 

Next we optimize function $f_p^{\textsc{det}}(g)$. When $p \leq \frac{1}{4} \left(2-\sqrt{2}\right)$, the optimal expansion factor (optimizer of $f_p^{\textsc{det}}(g)$) is 
$$
g_0(p):= \frac{-\left(\sqrt{2}+2\right) p+\sqrt{2}+1}{1-2 p^2}.
$$
It is easy to see that $2\leq g_0(p) <1/p$, for all $p \leq \frac{1}{4} \left(2-\sqrt{2}\right)$ (in fact the strict inequality holds for all $0<p<1$). In this case the induced competitive ratio becomes  $2 \left(\sqrt{2}+2\right)\approx 6.82843$. 

When $p\geq \frac{1}{4} \left(2-\sqrt{2}\right)$, the optimal expansion factor (at least $2$) is $g=2$, in which case the competitive ratio becomes
$
6-4 p+\frac{1}{1-2 p}.
$
Interestingly, the competitive ratio becomes at least $9$ for $p\geq \frac{1}{8} \left(\sqrt{17}-1\right) \approx  0.390388$. In other words, $0.390388$ is a threshold for the probability associated with the agent's faultiness for which, at least for the proposed algorithm, that probabilistic faultiness is useful anymore towards beating the provably optimal deterministic bound of $9$. On a relevant note, recall that by Lemma~\ref{lem: unbounded cr}, there is a threshold probability $p'$ such that any algorithm (even randomized) has competitive ratio at least $9$ when searching with a $p$-faulty agent, when $p\geq p'$.

\section{Searching with One Randomized (Improved Deterministic) $p$-Faulty Agent}
\label{sec: randomized algorithm}

In this section we equip the $p$-faulty agent with the power of randomness, and we show the next positive result. Note that due to Theorem~\ref{thm: deterministic simulates randomized}, the results can be simulated by a deterministic $p$-faulty agent too, up to any precision.

\begin{theorem}
\label{thm: rand algo limsup}
Let $p_0 = \frac{1}{8} \left(5-\sqrt{1+\ln ^2(2)+6 \ln (2)}-\ln (2)\right) \approx 0.241516$. 
For each $p <p_0$, let also $g_p$ denote the unique root, no less than 2, of
\begin{equation}
\label{equa: derivative f rand}
h_p(g):= (1-g p) (1+g (1-2 p) ) - g (1-p) \ln g. 
\end{equation}

Then for each $p \in (0,1/2)$, \search{p} with one agent admits a randomized algorithm with competitive ratio at most
$$
\left\{
\begin{array}{ll}
g_p \left( \frac{1-p}{1-g_pp} \right)^2+1, & p \in (0,p_0] \\
(1-p)\frac{ 1+2 (1-2 p)}{(1-2 p) \ln (2)}+1, & p \in (p_0,1/2).
\end{array}
\right.
$$
\end{theorem}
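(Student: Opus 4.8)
The plan is to mirror the deterministic development of Section~\ref{sec: deterministic algorithm}, upgrading Algorithm~\ref{algo: deterministic} to a randomized one and then reducing its analysis to the deterministic bound of Lemma~\ref{lem: cr deterministic in interval} via a scaling argument. Concretely, I would run the Baseline Search (Algorithm~\ref{algo: baseline search}) with two independent uniform random choices: a phase $\xi$ drawn uniformly from $[0,1)$ that rescales every intended turning point by the factor $g^{\xi}$ (so the intended turns occur at distances $g^{\xi}, g^{1+\xi}, g^{2+\xi}, \dots$), and a uniformly random initial direction. These are the only two calls to the random oracle, which is precisely what later makes Theorem~\ref{thm: deterministic simulates randomized} applicable.

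First I would exploit the scale-invariance of the trajectory. For a fixed phase $\xi$, multiplying every intended turning point by $g^{\xi}$ is the same as running the unscaled algorithm against a target at distance $D g^{-\xi}$ and then rescaling all distances (hence all times) by $g^{\xi}$; the Bernoulli fault structure is unchanged, since it is indexed by turn and not by distance. Consequently the randomized competitive ratio for a target at distance $D$ equals $\mathrm{E}_{\xi}\!\left[R_{\det}(D g^{-\xi})\right]$, where $R_{\det}$ is the \emph{exact} (position-dependent) deterministic competitive ratio. This is the step where I must refine Lemma~\ref{lem: cr deterministic in interval}: that lemma bounds the ratio by its worst value over each octave $(g^{t},g^{t+1}]$ through the term $(1-2p)^{t+1}$, whereas the phase integral needs the genuine dependence on the position of $D$ within an octave.

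Next I would perform the phase average. Writing $D=g^{s}$, as $\xi$ sweeps $[0,1)$ the quantity $Dg^{-\xi}=g^{s-\xi}$ sweeps one full octave, so the octave index of the catching excursion advances by exactly one. The random initial direction symmetrizes the parity of that catching excursion, so the averaged overhead combines two consecutive intended excursions, contributing a factor $g^{n}+g^{n+1}=g^{n}(1+g)$ (in the faulty case, the $(1-2p)$-weighted version $1+g(1-2p)$). The key computation is that the position-dependent overhead, integrated against $\d\xi$, telescopes across the octave boundary to the clean factor $\frac{g-1}{\ln g}$, which cancels the $\frac{1}{g-1}$ coefficient of the overhead and leaves a single $\frac{1}{\ln g}$. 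Combining these pieces with the faulty geometric sums already assembled for Lemma~\ref{lem: cr deterministic in interval} yields the closed form
\begin{equation*}
f_p^{\textsc{rand}}(g) = \frac{(1-p)\bigl(1+g(1-2p)\bigr)}{(1-g p)\ln g} + 1,
\end{equation*}
valid for $g\in[2,1/p)$, where $g<1/p$ is again exactly the condition ensuring the expected faulty overhead converges.

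Finally I would minimize $f_p^{\textsc{rand}}(g)$ over $g\in[2,1/p)$. Setting $\tfrac{\d}{\d g}f_p^{\textsc{rand}}(g)=0$ and clearing denominators reduces the stationarity condition to $h_p(g)=0$ with $h_p$ as in~\eqref{equa: derivative f rand}; I would argue $h_p$ has a unique root $g_p\ge 2$ precisely when $p\le p_0$, and that this root crosses the boundary value $2$ exactly at $p=p_0$, which is equivalent to $h_p(2)=0$ and, solved as a quadratic in $p$, gives the stated closed form for $p_0$. For $p\le p_0$ I would substitute the optimality relation $(1-g_p p)\bigl(1+g_p(1-2p)\bigr)=g_p(1-p)\ln g_p$ back into $f_p^{\textsc{rand}}(g_p)$ to eliminate $\ln g_p$, collapsing the value to $g_p\bigl(\frac{1-p}{1-g_p p}\bigr)^2+1$. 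For $p>p_0$ the unconstrained optimizer falls below $2$, so the constraint $g\ge 2$ binds, and evaluating at $g=2$ (where $1-g p=1-2p$) gives $(1-p)\frac{1+2(1-2p)}{(1-2p)\ln 2}+1$. The main obstacle is the middle step: converting the octave-level upper bound of Lemma~\ref{lem: cr deterministic in interval} into the exact within-octave ratio and carrying the probabilistic fault weights through the phase integral so that the telescoping yields the clean $1/\ln g$ factor; tracking the fault-induced geometric sums simultaneously with the continuous phase shift is the delicate part.
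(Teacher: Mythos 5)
Your proposal is correct, and it ends in exactly the same place as the paper --- the function $f_p^{\textsc{rand}}(g)=1+\frac{(1-p)(1+g(1-2p))}{(1-gp)\ln g}$ minimized over $g\in[2,1/p)$, stationarity condition $h_p(g)=0$, threshold $p_0$ from $h_{p_0}(2)=0$ solved as a quadratic in $p$, and the substitution $\ln g_p=\frac{(1-g_pp)(1+g_p(1-2p))}{g_p(1-p)}$ collapsing the optimum to $g_p\bigl(\frac{1-p}{1-g_pp}\bigr)^2+1$ --- but your route to $f_p^{\textsc{rand}}$ is genuinely different. The paper (Lemma~\ref{lem: cr randomized in interval}, proved in Section~\ref{sec: proof of lem: cr randomized in interval}) re-derives the expected cost from scratch: new recurrences with turning points $g^{k+\epsilon}$, a case split $\epsilon<\delta$ versus $\epsilon\ge\delta$ for a target at $g^{t+\delta}$, two linear systems solved with the same block-matrix inverse as in the deterministic proof, then integration over $\epsilon$ in each case. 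Your scale-invariance observation --- the phase-$\xi$ trajectory is the phase-$0$ trajectory with all distances and times multiplied by $g^{\xi}$, fault coins unchanged, so the randomized ratio at distance $D$ is $\int_0^1 R_{\det}(Dg^{-\xi})\,\mathrm{d}\xi$ --- replaces that entire re-derivation by an integral of the deterministic exact cost, and your octave-boundary split is precisely the paper's two cases in disguise. Moreover, the refinement of Lemma~\ref{lem: cr deterministic in interval} that you flag as the main obstacle is not actually missing: the paper's own proof of that lemma derives the exact position-dependent cost, equation~\eqref{eq:r zero}, before weakening it to a per-octave bound, and plugging it into your integral does telescope as you claim (the overheads of consecutive octaves differ by a factor $g$ up to $O((1-2p)^t)$ corrections, and $\int g^{-u}\,\mathrm{d}u$ across the boundary yields $(g-1)/\ln g$, cancelling the $1/(g-1)$). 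So your approach buys a shorter, conceptually cleaner derivation of the same formula; the paper's buys a self-contained closed form for every finite $t$ (its Lemma~\ref{lem: cr randomized in interval}) rather than only the limiting ratio. Two details to pin down: (i) your claim that the constraint $g\ge 2$ binds for $p>p_0$ needs $f_p^{\textsc{rand}}$ increasing on $[2,1/p)$ there; this follows since $h_p$ is strictly decreasing in $g$ and $h_p(2)<0$ for $p>p_0$, giving unimodality without the paper's convexity computation (beware that the paper's displayed derivative has a sign slip --- one gets $\frac{\partial}{\partial g}f_p^{\textsc{rand}}(g)=-\frac{(1-p)\,h_p(g)}{g(1-gp)^2\ln^2 g}$ --- though nothing changes, as the stationary point is the root of $h_p$ either way); and (ii) you should record that $g_p<1/p$, which is immediate from $g_p\in[2,4)$ and $1/p>1/p_0>4$.
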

Elementary calculations show that the competitive ratio of Theorem~\ref{thm: rand algo limsup} remains at most 9, as long as 
$$p\leq \frac{1}{8} \left(7+\sqrt{1+256 \ln ^2(2)-96 \ln (2)}-16 \ln (2)\right) \approx 0.436185.$$

First we argue that the premise regarding $g_p$ of Theorem~\ref{thm: rand algo limsup} is well defined. Indeed, consider $h_p(g)$
as in~\eqref{equa: derivative f rand}. 
We have that $\partial h_p(g) / \partial p = 4 g^2 p-g^2-3 g+g \ln g \leq -3 g + g \ln g <0$, for all $p \leq 1/4$. 
Therefore, $h_p(2) = 8 p^2-10 p+2 p \ln (2)+3-2$ is decreasing in $p$, and hence $h_p(2)>h_{p_0}(2) = 0$, by the definition of $p_0$. Also
$h_p(4) = 32 p^2-28 p+4 p \ln (4)+5-4 \ln 4$ is decreasing in $p$ too, and therefore
$h_p(4) \leq h_0(4) = 5-4 \ln 4 \approx -0.545177 <0$.
This means that $h_p(g)$ has indeed a root, with respect to $g$, in $[2,4)$, for all $p \in (0,p_0]$. Next we show that this root is unique. Indeed, we have 
$\partial h_p(g) / \partial g = 4 g p^2-2 g p+p \ln g-\ln g-2 p \leq - \ln g <0$, for $p \in [0,1/2]$. 
Therefore, for all $p\leq p_0<1/2$, function $h_p(g)$ is decreasing in $g$, and hence any root is unique, that is $g_p$ is indeed well-defined and lies in the interval $[2,1/p)$, for all $p \in (0,1/2)$.

Next, we observe that as $p\rightarrow 0$, the competitive ratio promised by Theorem~\ref{thm: rand algo limsup} when searching with a $p$-faulty agent (i.e. a nearly non-faulty agent) is given by $g_0$ being the root of $g-g \ln g+1$, i.e. $g_0 = \frac{1}{W\left(\frac{1}{e}\right)} \approx 3.59112$, where $W(\cdot)$ is the Lambert W-Function.\footnote{The Lambert W-Function is the inverse function of $L(x)=x e^x$.} Moreover, the induced competitive ratio is $1+g_0 \approx 4.59112$, which is exactly the competitive ratio of the optimal randomized algorithm for the original linear search problem due to Kao et al~\cite{kao1996searching}. We view this also as a sanity check regarding the correctness of our calculations.

Not surprisingly, our algorithm that proves Theorem~\ref{thm: rand algo limsup} is an adaptation of the celebrated optimal randomized algorithm for linear search with a $0$-faulty agent of~\cite{kao1996searching}.
As before, the search algorithm, see Algorithm~\ref{algo: randomized} below, is determined by some expansion factor $g$, that represents the (intended, in our case, due to faults) factor by which the searched space is extended to, each time the direction of movement changes. 
The randomized algorithm makes two queries to a random oracle. 
First it chooses a random bit, representing an initial random direction to follow. 
Second, the algorithm samples random variable $\epsilon$ that takes a value in $[0,1]$, uniformly at random. Variable $\epsilon$ quantifies a random scale to the intended turning points.
It is interesting to note that setting $\epsilon = 0$ in Algorithm~\ref{algo: randomized} deterministically, and removing the initial random direction choice, gives rise to the previous deterministic Algorithm~\ref{algo: deterministic}.
 


\begin{algorithm}
    \caption{Faulty Randomized Linear Search} 
    \label{algo: randomized}
    \begin{algorithmic}[1]
        \REQUIRE $g \geq 2$ and $0 < p < 1/2$
        \STATE Choose $\epsilon$ from $[0,1)$ uniformly at random
        \STATE Set $i \leftarrow \epsilon$ and $d \leftarrow 1$ \COMMENT{$d\in\{1,-1\}$ represents direction ($1$ going right and $-1$ going left)}
	\STATE Run Algorithm~\ref{algo: baseline search} with parameters $g,p,i,d$.
    \end{algorithmic}
\end{algorithm}

We show next how Theorem~\ref{thm: rand algo limsup} follows by the following technical lemma, whose proof appears in Section~\ref{sec: proof of lem: cr randomized in interval}. 

\begin{lemma}
\label{lem: cr randomized in interval}
For any $p\in (0,1/2)$ and any $g\in [2,1/p)$, if the target is placed in the interval $(g^t, g^{t+1}]$, then the competitive ratio of Algorithm~\ref{algo: randomized} is at most
$$
    1 + 
\frac{1}{(1-gp) \ln g}
\left(
\begin{array}{ll}
(1 - p) (1 - g (-1 + 2 p)) (1 + (-1 + 2 p)^t)  \\
~~~ + \frac{2 (1 - p)}{g^t} 
 +2 g (1 - p) (p + g^t (-1 + p) (-1 + 2 p)^t
\end{array}
\right).
$$
\end{lemma}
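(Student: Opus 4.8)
The plan is to compute the expected termination time $E(x)$ of Algorithm~\ref{algo: randomized} for a target at distance $x\in(g^t,g^{t+1}]$, where the expectation is over the probabilistic faults, the random scale $\epsilon$, and the random initial direction, and then to bound $E(x)/x$ uniformly over the interval. The backbone is a decomposition of the trajectory into one-directional \emph{runs} delimited by the successful turns. Three structural facts drive the analysis: (i) by independence and memorylessness of the faults, the number of consecutive failed turns preceding each success is geometric, so the agent overshoots its intended turning point $g^{k+\epsilon}$ to $g^{k+\epsilon+G}$ with $E[g^{G}]=\frac{1-p}{1-gp}$, finite precisely because $g<1/p$; this is the source of the $\frac{1}{1-gp}$ factor. (ii) The physical side on which the agent attempts its $k$-th turn is governed by the parity of the number of earlier successes, and since $E[(-1)^{\#\text{successes in }k\text{ trials}}]=(-(1-2p))^{k}$, the parity bookkeeping produces the $(1-2p)^{t}$ terms. (iii) A target at distance $x\in(g^t,g^{t+1}]$ is first reached during the earliest run heading toward it whose overshoot passes $x$, i.e.\ whose first success occurs at an index $k$ with $g^{k+\epsilon}\ge x$.

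First I would condition on a fixed value $\epsilon=e$ and a fixed initial direction. With $\epsilon=e$, running Algorithm~\ref{algo: baseline search} with $i=e$ is exactly the trajectory of the deterministic Algorithm~\ref{algo: deterministic} with every intended turning point scaled by $g^{e}$; hence the expected cost over the faults can be obtained by the very computation behind Lemma~\ref{lem: cr deterministic in interval}, but now retaining the dependence on which side the target lies (the deterministic lemma already collapses to the worst side, whereas here I must keep both). Concretely, I would write the conditional expected cost as a sum of three contributions: a geometric tail of the lengths of the runs strictly preceding the critical one (the $j$-th such length has expectation a $g$-power times $\big(\frac{1-p}{1-gp}\big)^{j}$, and the summed pre-critical lengths are a bounded multiple of the critical one because the per-run growth factor $g\frac{1-p}{1-gp}$ exceeds $1$); a parity-correction term carrying $(1-2p)^{n(e)}$, where $n(e)=\lceil \log_g x-e\rceil$ is the critical index; and the final approach, which contributes the distance $x$ itself and yields the additive $+1$ after dividing by $x$.

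Next I would average. Averaging over the two initial directions symmetrizes the side of the target, replacing the single parity term by the symmetric combination $1+(-(1-2p))^{t}$ in the statement (recall $(-1+2p)=-(1-2p)$). Averaging over $\epsilon\in[0,1)$ is the randomized improvement: since $\int_0^1 g^{c+e}\,de=g^{c}\frac{g-1}{\ln g}$, every factor $\frac{1}{g-1}$ produced by the deterministic geometric tail is converted into the factor $\frac{1}{\ln g}$ appearing in the bound. The split of the range $e\in[0,1)$ at the single point where $n(e)$ jumps between $t$ and $t+1$ is what produces the boundary terms $\frac{1}{g^{t}}$ and the $g^{t}$-weighted term. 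Finally, to turn the expression for $E(x)$ into a bound depending only on $t$, I would bound $x$ by the endpoints of $(g^t,g^{t+1}]$ (using $x>g^t$ for terms decreasing in $x$ and $x\le g^{t+1}$ otherwise), collect everything over the common denominator $(1-gp)\ln g$, and simplify to the claimed expression.

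The main obstacle is the simultaneous bookkeeping of the three coupled sources of randomness: the geometric overshoots, the parity-determined side, and the continuous scale-shift. One must correctly identify the critical run as a function of both the fault pattern and $e$, verify convergence of the overshoot-weighted series via $g<1/p$, and handle the threshold split of the $\epsilon$-integral cleanly, since it is the interaction of this split with the parity terms that yields the delicate $g^{\pm t}$ corrections. Keeping the side-dependence explicit throughout, rather than collapsing to the worst side as in Lemma~\ref{lem: cr deterministic in interval}, is precisely what makes the randomized computation substantially more technical than its deterministic counterpart.
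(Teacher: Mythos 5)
Your proposal follows essentially the same route as the paper's proof: condition on $\epsilon$ and the initial direction, solve the resulting scaled copy of the deterministic recurrences by reusing the linear-system/matrix-inverse machinery behind Lemma~\ref{lem: cr deterministic in interval}, split the range of $\epsilon$ at the single threshold where the critical index jumps between $t$ and $t+1$ (the paper's cases $\epsilon<\delta$ versus $\epsilon\geq\delta$ for a target at $g^{t+\delta}$), integrate over $\epsilon$ to convert the geometric factors into the $\ln g$ denominators, and finally bound the target distance by the endpoints of $(g^t,g^{t+1}]$. One minor misattribution: the symmetric factor $1+(-1+2p)^t$ emerges in the paper from combining the two $\epsilon$-cases (parity-type terms already appear in each conditional solution $R(0)$ taken alone), rather than from averaging over the initial direction, but this does not affect the soundness of your plan.
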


Recall that $p \in (0,1/2)$, so taking the limit of the expression of Lemma~\ref{lem: cr randomized in interval} when $t\rightarrow \infty$, and after simplifying algebraically the expression, shows that the competitive ratio of Algorithm~\ref{algo: randomized} is at most 
$$
f_p^{\textsc{rand}}(g):=    1 + \frac{(1 - p) (1 + g (1 - 2 p))}{(1 - g p) \ln g},
$$
for all $p,g$ complying with the premise. Next we optimize $f_p^{\textsc{rand}}(g)$ for all parameters $p\in (0,1/2)$, under the constraint that $2\leq g <1/p$. 
In particular, we show that the optimizers of $f_p^{\textsc{rand}}(g)$ are $g_p$, if $p\leq p_0$, and $g=2$ otherwise resulting in the competitive ratios as described in Theorem~\ref{thm: rand algo limsup}.

First we compute
$
\frac{\partial  }{\partial g } f_p^{\textsc{rand}}(g)
=
\frac{1-p }{g (1-g p)^2 \ln ^2g} h_p(g),
$
where $h_p(g)$ is the same as~\eqref{equa: derivative f rand}. As already proven below the statement of Theorem~\ref{thm: rand algo limsup}, we have that $h_p(g)$ has a unique root in the interval $[2,4)$. 
Next we show that $f_p^{\textsc{rand}}(g)$ is convex.

Indeed, we have that 
$$
\frac{\partial^2  }{\partial g^2 } f_p^{\textsc{rand}}(g)
=
\frac{1-p}{g^2 (1-g p)^3 \ln ^3(g)} s_p(g),
$$
where $s_p(g)=\sum_{i=0}^3\alpha_i p^i$ is a degree 3 polynomial in $p$ with coefficients 
\begin{align*}
\alpha_3&=2 g+g (-\ln g)+\ln (g)+2 \\
\alpha_2&=2 g \left(-2 g+g \ln ^2g-\ln g-4\right) \\
\alpha_1&= g^2 \left(2 g-2 \ln ^2g+g \ln g+3 \ln g+10\right) \\
\alpha_0&=-2 g^3 (\ln g+2) 
\end{align*}
As a result, it is easy to verify that $s_p(g)$ remains positive for all $p \in (0,1/2)$, condition on that $g\in [2,4]$ (in fact the optimizers as described in Theorem~\ref{thm: rand algo limsup} do satisfy this property). 
We conclude that $f_p^{\textsc{rand}}(g)$ is convex in $g$.

Together with our previous observation, this means that, under constraint $g\geq 2$, function $f_p^{\textsc{rand}}(g)$ is minimized at the unique root of $\frac{\partial  }{\partial g } f_p^{\textsc{rand}}(g)$ when $p\leq p_0$, and at $g=2$ when $p \in [p_0,1/2)$. These are exactly the optimizers described in Theorem~\ref{thm: rand algo limsup}, where in particular the competitive ratio $f_p^{\textsc{rand}}(g)$ is simplified taking into consideration that for the chosen value of $g$ we have that $s_p(g)=0$, for all $p\leq p_0$.

Lastly, it remains to argue that all optimizers of $f_p^{\textsc{rand}}(g)$ are indeed at most $1/p$. For this, it is enough to show that the unique root $g_p$ of $h_p(g)$ is at most $1/p$, for all $p\leq p_0$ (since for larger values of $p$ we use expansion factor $g=2$). For this, and since $g\geq 2$, we have 
$
h_p(g) \leq 2 g^2 p^2-g^2 p-3 g p+g p \ln 2+g-g \ln 2+1.
$
The latest expression is a polynomial in $p$ of degree 2, which has only one of its roots positive, namely
$$
\frac{-\sqrt{(-3 p+p \ln 2+1-\ln 2)^2-4 \left(2 p^2-p\right)}+3 p+p (-\ln 2)-1+\ln 2}
{2p \left(2 p-1\right)}.
$$
Simple calculations then can show that the latter expression is at most $1/p$ for all $p\in (0,1/2)$. In fact one can show that the expression above, multiplied by $p$, is strictly increasing in $p$ and at $p=1/4 >p_0$ becomes 
$$\frac{1}{4} \left(1-\ln8+\sqrt{9+(\ln8-2) \ln8}\right) \approx 0.486991 <1/2.$$
That shows that for each $p<p_0$ we have that for the unique root $g_p$ of $h_p(g)$ the inequality $g_p<1/2p<1/p$ is valid, as desired.

\section{Searching with Two $p$-Faulty Agents}
\label{sec: two agents}

In this section we present algorithms for \search{p} for two faulty agents, for all $p\in (0,1/2)$. 
Central to our initial results is the following subroutine that, at a high level, will be used by a $p$-faulty agent, the \emph{Leader}, in order to make a ``forced'' turn, with the help of a \emph{Follower}, which can be either a distinguished immobile point, e.g. the target or the origin, or another $p$-faulty agent. 
In this process the $p$-faulty agent who undertakes the role of the Follower may need to either slow down or even halt for some time, still complying with the probabilistic faulty turns (once halted, she can continue moving in the previous direction, but changing it is subject to a fault). 

\begin{algorithm}
    \caption{Force Change Direction (Instructions for a Leader)} 
    \label{algo:subroutine change direction}
    \begin{algorithmic}[1]
        \REQUIRE $\gamma$ small real number, Follower either mobile or immobile. 
            \REPEAT
                \STATE Change direction \COMMENT{This fails with probability $p$}
                \STATE Move for time $\gamma$ if Follower is mobile, and $\gamma \leftarrow 2\gamma$ if Follower is immobile.
            \UNTIL{You meet with follower}
            \STATE Communicate to Follower the Leader's direction 
    \end{algorithmic}
\end{algorithm}

It will be evident, in the proof of Lemma~\ref{lem: 2 agents simulate 1} below, that Algorithm~\ref{algo:subroutine change direction} will allow an agent to change direction arbitrarily close to an intended turning point, and with arbitrary concentration (both controlled by parameter $\gamma$). 

The next lemma refers to a task that two $p$-faulty agents can accomplish independently of the underlying communication model. At a high level, the lemma establishes that two $p$-faulty agents can bypass the probabilistic faults at the expense of giving up the independence of the searchers' moves. 

\begin{lemma}
\label{lem: 2 agents simulate 1}
For every $p\in [0,1/2)$, two $p$-faulty agents can simulate the trajectory of a deterministic $0$-faulty agent within any precision (and any probability concentration).
\end{lemma}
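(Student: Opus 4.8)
The plan is to have the two $p$-faulty agents traverse the $0$-faulty agent's trajectory \emph{in lockstep} along each straight segment, and to realize each of its turns by a single invocation of the forced-turn subroutine (Algorithm~\ref{algo:subroutine change direction}), with one agent playing the role of an immobile Follower and the other the role of the Leader. Concretely, while the simulated agent moves along a maximal straight segment both agents move together in that direction; since no direction change is attempted, these moves are fault-free. When the simulated agent reaches a turning point $T$, one agent halts at $T$ (it is allowed to stay put, and staying put incurs no fault), thereby acting as an immobile distinguished point, while the other runs Algorithm~\ref{algo:subroutine change direction}. Upon success the Leader, now moving in the new direction, passes over $T$, ``picks up'' the waiting Follower (a fault-free operation, since the Leader does not itself change direction), and the two resume moving together. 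It therefore suffices to establish that (i) a single forced turn is completed with a final position error and a time overhead controllable by $\gamma$ and with geometric tails, and (ii) these per-turn errors accumulate in a controllable way.

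For step (i) the crucial point is that the Leader is \emph{oblivious} to whether each of its turn attempts succeeded, so it cannot deduce its own position and must rely on \emph{physically meeting} the immobile Follower to detect the first successful turn. I would argue by induction on the number of failed attempts that, by doubling the travel length at each attempt, the subroutine maintains the invariant that before the $k$-th attempt the Leader sits at distance $(2^{k}-1)\gamma$ on the far side of $T$; hence a successful turn at the $k$-th attempt triggers a travel of length $2^{k}\gamma$ back toward $T$ that strictly exceeds the accumulated displacement, so the Leader necessarily passes over $T$, meets the Follower, and comes to rest within distance $O(\gamma)$ of $T$. The number of attempts $N$ until the first success is geometric with parameter $1-p$, and both the total time spent and the largest excursion away from $T$ are $O(2^{N}\gamma)$. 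This is exactly where the hypothesis $p<1/2$ enters: since $E[2^{N}]=\frac{2(1-p)}{1-2p}<\infty$ and the probability of needing more than $k$ attempts is $p^{k}$, both the expected overhead and the tail probability of a large excursion can be made as small as desired by taking $\gamma$ small relative to the prescribed excursion bound. (For $p\ge 1/2$ the expected overhead diverges, which is consistent with Lemma~\ref{lem: unbounded cr}.)

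For step (ii) I would assign to the $i$-th turn a shrinking parameter, say $\gamma_{i}=2^{-i}\gamma$, so that the accumulated final-position error $\sum_{i}O(\gamma_{i})$ and the accumulated expected overhead $\sum_{i}O\!\left(\frac{1-p}{1-2p}\,\gamma_{i}\right)$ are finite and below any prescribed precision. Because a target at distance $d$ is reached by the simulated trajectory after only finitely many turns, a union bound over these turns, using the geometric tails from step (i), shows that for any prescribed $\delta>0$ every turn is executed with excursion below the chosen bound with probability at least $1-\delta$; on the complementary event the agents simply continue, contributing only a vanishing amount. Combining the two steps produces a joint trajectory that tracks the $0$-faulty agent's trajectory within any precision and with any desired probability concentration, as claimed.

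The main obstacle is step (i): reconciling the geometric \emph{growth} of the per-attempt travel lengths, which is forced by the Leader's obliviousness together with the immobility of the reference point, with the geometric \emph{number} of failures, so that the resulting overhead has finite expectation and controllable tails. This is precisely the juncture at which $p<1/2$ becomes indispensable, and it also explains why a stationary agent, rather than a freely moving one, is the right choice of Follower here: a $p$-faulty Follower cannot reliably reverse direction either, so keeping it fixed is the only fault-free way to furnish a recognizable reference for the Leader.
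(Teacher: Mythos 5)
Your proposal is correct, but it takes a genuinely different route from the paper's proof. The paper never halts the Follower: in its construction both agents keep moving in the same direction, with the Follower briefly slowing down so that when the gap between the agents reaches $2\gamma$ the Follower is exactly $\gamma/(1-p)$ short of the intended turning point; the Leader then runs Algorithm~\ref{algo:subroutine change direction} in its \emph{mobile}-Follower mode, with constant step $\gamma$. Because a failed attempt leaves both agents moving at equal speed in the same direction, the gap is \emph{preserved} rather than doubled, so after $i$ failures the delay is only $(i+1)\gamma$ -- linear, not geometric -- giving expected overhead $\sum_{i\ge 0}(i+1)\gamma(1-p)p^i=\gamma/(1-p)$ and an \emph{expected} meeting point equal to the intended turning point, with exponentially decaying tails. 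Your version instead parks the Follower at the turning point and uses the immobile-Follower (doubling) mode, so the meeting occurs \emph{exactly} at $T$ (zero position error, a nice feature the paper's proof lacks), but the time overhead becomes $\Theta(2^N\gamma)$ with $N$ geometric, whose expectation $\tfrac{2(1-p)}{1-2p}\gamma$ carries a $1/(1-2p)$ factor and genuinely requires $p<1/2$. The paper's mechanism has no such restriction -- it converges for every $p<1$ -- so it degrades gracefully as $p\to 1/2$ and would in fact prove the lemma beyond the stated range, whereas yours is tied to the lemma's hypothesis. Both proofs then control accumulation identically, by shrinking the per-turn parameter and using geometric tails. One side remark of yours is false: you claim that keeping the Follower fixed is ``the only fault-free way'' to furnish a reference, but the paper's Follower never attempts a turn either -- it merely modulates its speed, which the model allows without faults -- so a mobile reference is equally fault-free, and quantitatively better here.
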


\begin{proof}
Consider two $p$-faulty agents that are initially collocated at the origin. We show how the agents can simulate (at any precision) a deterministic trajectory. For this we need to show how the two agents can successfully make a turn at any point without deviating (in expectation) from that point. 

Indeed, consider a scheduled turning point and consider the two $p$-faulty agents approaching that point. For some $\gamma >0$ small enough, at time $2\gamma$ before the agents arrive at the point, the two agents undertake two distinguished roles, that of a Leader and that of a Follower. The roles can remain invariant throughout the execution of the algorithm. 
The Follower instantaneously slows down so that when the distance of the Leader and the Follower becomes $2\gamma$, the Follower is $\gamma/(1-p)$ before the turning point (which is strictly more than $\gamma$ and strictly less than $2\gamma$), and as a result the Follower has passed the turning point by $\frac{1-2p}{1-p}\gamma$. 
 At this moment, the Follower resumes full speed, and both agents move towards the same direction as before. Then, the Leader runs Algorithm~\ref{algo:subroutine change direction} with mobile Follower being the other $p$-faulty agent. 

Note that if at any moment the two agents meet, it is because after a successful turn the two have moved towards each other for time $\gamma$, whereas if a turn is unsuccessful the two preserve their relative distance $\gamma$. Therefore, since the moment of the first turning attempt, the two agents meet in expected time
$
\sum_{i=0}^\infty (i+1)\gamma (1-p)p^i =  \frac{\gamma}{1-p}.
$
We conclude that the expected meeting (and turning) point is the original turning point. 
Most importantly, the probability that the resulting turning point is away from the given turning point drops exponentially with $p$, and is also proportional to $\gamma$, which can be independently chosen to be arbitrarily small.
\qed \end{proof}

Lemma~\ref{lem: 2 agents simulate 1} is quite powerful, since it shows how to simulate deterministic turns  with arbitrarily small deviation from the actual turning points. More importantly, that deviation can be chosen to drop arbitrarily fast, dynamically, hence we can achieve smaller expected deviation later in the execution of the algorithm, compensating this way for the passed time. Therefore, we obtain the following theorem. 

\begin{theorem}
\label{thm: 2 agents 9 competitive}
For all $p \in [0,1/2)$, two deterministic faulty agents can solve \search{p} with competitive ratio $9+\epsilon$, for every $\epsilon>0$, independently of the underlying communication model. Also, the performance is concentrated arbitrarily close to $9+\epsilon$. 
\end{theorem}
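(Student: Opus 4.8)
The plan is to combine Lemma~\ref{lem: 2 agents simulate 1} with the celebrated deterministic cow-path algorithm of~\cite{baezayates1993searching}, which achieves competitive ratio $9$ for a single $0$-faulty agent. First I would invoke Lemma~\ref{lem: 2 agents simulate 1}, which guarantees that two $p$-faulty agents can simulate the trajectory of any deterministic $0$-faulty agent within any precision and with any desired probability concentration. Applying this to the optimal $9$-competitive zig-zag trajectory, the two agents will (jointly) visit the same sequence of intended turning points as the single non-faulty agent, deviating from each scheduled turning point by an arbitrarily small, controllable amount. The target is thus reached in time arbitrarily close to the $9d$ incurred by the simulated trajectory, and since both agents are co-located (or within $\gamma$ of each other) at every turn, both agents reach the target essentially simultaneously, satisfying the feasibility requirement that the \emph{last} agent arrives.

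The key technical point I would make explicit is the bookkeeping of the accumulated deviation. A naive simulation with a fixed $\gamma$ would add a constant additive error at each of the (unboundedly many) turns, which could inflate the total time by more than an additive $\epsilon$ relative to $9d$. To handle this, I would exploit the remark following Lemma~\ref{lem: 2 agents simulate 1}: the deviation parameter $\gamma$ can be chosen dynamically and made to shrink geometrically fast from one iteration to the next. By choosing $\gamma_i$ at the $i$-th turn so that $\sum_i \gamma_i$ is a convergent series bounded by any prescribed $\epsilon \cdot d$ (recalling that the $i$-th intended turning point is at distance growing like $g^i$, so the relative overhead per turn is summable), the total expected time becomes $9d + O(\epsilon d)$, yielding competitive ratio $9+\epsilon$ after taking $\limsup_x E(x)/|x|$. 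I would also note that re-scaling the intended turning points (as the footnote to Lemma~\ref{lem: cr deterministic in interval} does) absorbs the standard additive slack.

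For the concentration claim, I would argue from the exponential tail established in the proof of Lemma~\ref{lem: 2 agents simulate 1}: at each turn the probability that the realized meeting point is far from the intended point drops exponentially (the meeting happens after $i+1$ failed-then-successful attempts with probability $(1-p)p^i$). As long as $p>0$, choosing each $\gamma_i$ small enough makes the probability of a large cumulative deviation over \emph{all} turns arbitrarily small by a union bound over a geometrically decreasing sequence of failure probabilities. Hence the total termination time is concentrated arbitrarily close to $9d+\epsilon d$, not merely in expectation, which is the stated strengthening over the one-agent results.

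The main obstacle I expect is the uniform control of the cumulative error across infinitely many turns while keeping the concentration guarantee: I must verify that the geometrically shrinking $\gamma_i$ can be chosen to simultaneously (i) keep the summed expected deviation within $\epsilon d$, (ii) keep the union-bounded tail probability of any large deviation below any target $\delta$, and (iii) respect the speed-$1$ constraint (each agent, including the Follower who slows down, must remain feasible). Since all three requirements only tighten $\gamma_i$ and the per-turn overhead is already summable, no genuine tension arises, but writing the bound cleanly — ensuring the $\limsup$ over target distance $x$ does not accumulate the finitely-many early large deviations — is the delicate part of the argument.
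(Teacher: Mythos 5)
Your proposal is correct and follows essentially the same route as the paper: the paper derives Theorem~\ref{thm: 2 agents 9 competitive} as an immediate consequence of Lemma~\ref{lem: 2 agents simulate 1} applied to the $9$-competitive zig-zag trajectory of~\cite{baezayates1993searching}, with the remark that the deviation parameter can be shrunk dynamically across turns to keep the cumulative error within $\epsilon$, and with concentration obtained by tuning the parameters of Algorithm~\ref{algo:subroutine change direction} at each invocation. In fact, your write-up makes explicit the summability and union-bound bookkeeping that the paper only sketches informally, so no gap remains.
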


Is it worthwhile noticing that agents' movements, in the underlying algorithm of Theorem~\ref{thm: 2 agents 9 competitive} is still probabilistic, due to the probabilistic faulty turns. However, choosing appropriate parameters every time Algorithm~\ref{algo:subroutine change direction} is invoked, one can achieve arbitrary concentration in the expected performance of the algorithm, hence the bound of $9+\epsilon$ can be practically treated as deterministic. 

In contrast, using the same trick, we can achieve a much better competitive ratio, but only in expectation equal to the one of~\cite{kao1996searching} (with uncontrolled concentration). To see how, note that by the proof of Theorem~\ref{thm: deterministic simulates randomized}, the two $p$-faulty agents can stay together in order to collect sufficiently many random bits and simulate any finite number of queries to a random oracle. 
Then using Lemma~\ref{lem: 2 agents simulate 1}, the agents simulate the optimal randomized algorithm of ~\cite{kao1996searching} with performance $4.59112$, designed originally for one randomized $0$-faulty agent that makes only 2 queries to the uniform distribution. In other words, the two deterministic $p$-faulty agents can overcome their faulty turns using Lemma~\ref{lem: 2 agents simulate 1} and the lack of random oracle by invoking Theorem~\ref{thm: deterministic simulates randomized}. To conclude, we have the following theorem which requires that $p>0$. 

\begin{theorem}
\label{thm: 2 agents 4.59112 competitive}
Two deterministic faulty agents can solve \search{p} with competitive ratio $4.59112+\epsilon$, for every $\epsilon>0$ and for every $p\in (0,1/2)$, independently of the underlying communication model. 
\end{theorem}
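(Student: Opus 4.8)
The plan is to obtain the bound by composing two ingredients already developed in this section: Lemma~\ref{lem: 2 agents simulate 1}, which lets two $p$-faulty agents reproduce any deterministic $0$-faulty trajectory within arbitrary precision, and the randomness-harvesting idea underlying Theorem~\ref{thm: deterministic simulates randomized}, which lets faulty agents fabricate the uniform samples needed by a randomized algorithm. The trajectory to be reproduced is the optimal randomized linear search of Kao et al.~\cite{kao1996searching}, whose expected competitive ratio is $4.59112$ and which queries the uniform distribution only finitely many times (two queries, exactly as in Algorithm~\ref{algo: randomized}). The key observation is that, once its two uniform samples are fixed, this algorithm collapses to a single deterministic $0$-faulty trajectory; hence overcoming the faults (Lemma~\ref{lem: 2 agents simulate 1}) and supplying the randomness are the only two tasks.

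First I would describe a randomness-generation phase. The two agents remain colocated inside an arbitrarily small neighborhood of the origin and repeatedly attempt to turn; using the origin (or each other) as an anchor, each turn attempt is detectable as a success or failure, i.e. as a $\mathrm{Bernoulli}(1-p)$ coin. A standard debiasing procedure (pairing consecutive outcomes in the style of von Neumann) converts these biased coins into unbiased bits, from which the agents assemble the two uniform samples to any prescribed precision. Since only finitely many bits are needed to reach a target precision, this phase stays in a bounded region and takes bounded expected time $T=T(p,\epsilon)$; because the agents are colocated throughout, they share identical bits in either communication model. Note that this is exactly where $p>0$ is needed: the debiasing step requires genuinely random faults, $p\in(0,1/2)$, to produce bits at all.

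Next I would invoke Lemma~\ref{lem: 2 agents simulate 1}. With the two samples fixed, Kao's algorithm is a deterministic $0$-faulty trajectory, and the lemma lets the two $p$-faulty agents reproduce it within any precision, the faulty turns being absorbed by the Leader/Follower mechanism of Algorithm~\ref{algo:subroutine change direction}. The accounting is then immediate: the randomness phase contributes only the bounded additive term $T$ to the expected cost $E(x)$, which vanishes in $\limsup_{x} E(x)/|x|$; and by continuity of Kao's per-input cost in its random scale parameter, approximating the uniform samples and the simulated turning points to sufficiently fine precision perturbs the expected competitive ratio by at most $\epsilon$. This yields expected competitive ratio $4.59112+\epsilon$, in either model.

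I expect the main obstacle to be this final accounting/continuity step rather than the mechanics: one must verify that replacing the exact uniform samples by a finite-bit approximation, and replacing exact turns by the $\epsilon$-precise simulated turns of Lemma~\ref{lem: 2 agents simulate 1}, together change the \emph{expectation} of Kao's competitive ratio by only $O(\epsilon)$. Finally, unlike Theorem~\ref{thm: 2 agents 9 competitive}, no concentration can be claimed here, since the underlying randomized algorithm of~\cite{kao1996searching} guarantees its ratio only in expectation; accordingly the statement asserts only the expected competitive ratio $4.59112+\epsilon$.
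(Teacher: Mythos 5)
Your proposal is correct and follows essentially the same route as the paper: harvest random bits from the faulty turns while staying near the origin (the mechanism behind Theorem~\ref{thm: deterministic simulates randomized}), then use Lemma~\ref{lem: 2 agents simulate 1} to execute the now-deterministic trajectory of the Kao et al.\ algorithm, noting that the bound holds only in expectation. The only cosmetic difference is that you debias the Bernoulli outcomes via von Neumann pairing, whereas the paper amplifies the fault probability toward $1/2$ (Lemma~\ref{lem: probability amplification}); both serve the same purpose.
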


In our final main result we show that two $p$-faulty agents operating in the wireless model can do better than $9$ for all $p<1/2$, as well as better than $4.59112$ for a large spectrum of $p$ values. Note that the achieved competitive ratio is at least $3$, which is the optimal competitive ratio for searching with two $0$-faulty agents in the wireless model, and that our result matches this known bound when $p\rightarrow 0$.

\begin{theorem}
\label{thm: 2 agents improved competitive}
Two deterministic $p$-faulty agents in the wireless model can solve \search{p} with competitive ratio $3+4\sqrt{p(1-p)}+\epsilon$, for every $\epsilon>0$ and for every $p\in [0,1/2)$.
\end{theorem}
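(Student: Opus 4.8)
The plan is to build on the optimal two-robot wireless strategy for the fault-free case, where one agent moves right and the other left at unit speed, yielding competitive ratio $3$: if the target lies at $+d$, the right agent (the \emph{finder}) discovers it at time $d$ while the other agent (the \emph{turner}) sits at $-d$, and in the fault-free world the turner reverses instantly and reaches the target at time $3d$. The only thing that breaks in our setting is the turner's reversal, which is a faulty turn. First I would fix this base search (both agents outward at unit speed, wireless broadcast on discovery) and reduce the whole theorem to bounding the expected \emph{extra} time the turner needs to reach the target after the finder's broadcast, measured against the ideal $3d$.

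The reversal mechanism is where I would spend the effort. The finder, sitting exactly on the target (a distinguished point), can force its own reversal using Algorithm~\ref{algo:subroutine change direction} with the immobile follower being the target: because the step lengths double, the walk re-crosses the anchor at the \emph{first} successful turn, so the expected displacement during the forced reversal is $O(\gamma/(1-2p))$ and vanishes with $\gamma$. Having reversed, the finder sweeps back toward the turner and serves as a \emph{mobile} follower, exactly as in Lemma~\ref{lem: 2 agents simulate 1}, so that any turn the two perform together is reliable up to an arbitrarily small, dynamically shrinking deviation. With this in hand, the turner first \emph{gambles} a single solo reversal, detected against the origin (it has reversed iff it re-reaches the origin at the expected time); with probability $1-p$ this succeeds and the turner attains the ideal $3d$, while with probability $p$ the failure is detected and the turner halts to be picked up and reliably reversed by the incoming finder.

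To produce the stated bound I would introduce a single design parameter controlling how aggressively the turner commits before deferring to the finder's help, and write the expected termination time as $3d$ plus an overhead with two competing contributions: a success-branch term that \emph{decreases} in the parameter and a failure-branch recovery term that \emph{increases} in it (the later the commitment, the farther out a failed turner must be fetched). Balancing the two by the arithmetic–geometric mean inequality, in the form $\min_{a}\bigl(2p\,a+2(1-p)/a\bigr)=4\sqrt{p(1-p)}$, yields overhead $4\sqrt{p(1-p)}\,d$ and hence competitive ratio $3+4\sqrt{p(1-p)}$. The additive $\epsilon$ then absorbs the $O(\gamma)$ slack of every forced turn, and choosing $\gamma$ to shrink geometrically along the execution gives the claimed arbitrarily high concentration: since $p<1/2$, the number of failed attempts in any forced turn has a geometric tail, so a standard tail bound makes the realized ratio lie within $\epsilon$ of its mean with arbitrarily high confidence.

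The step I expect to be the main obstacle is proving that the recovery cost stays bounded for \emph{every} $p<1/2$, i.e.\ that no branch suffers the geometric blow-up that plagues a lone faulty agent. The essential point is that the doubling step sizes force the reversal to terminate at the first successful Bernoulli trial, so its expected cost is a convergent geometric series precisely when $2p<1$; the same $p<1/2$ threshold is what keeps the fetch-and-reverse recovery finite and matches the impossibility of Lemma~\ref{lem: unbounded cr}. A secondary but delicate point is the disorientation bookkeeping: every decision must be triggered by physically reaching a distinguished point (the origin, the target, or the other agent) rather than by dead reckoning of direction, and I would need to check that the optimal parameter keeps the commitment point feasible (non-negative, and never forcing a meeting beyond the target) across the whole range $p\in[0,1/2)$.
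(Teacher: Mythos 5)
Your algorithm is genuinely different from the paper's, and the difference matters for how the bound comes out. In the paper's Algorithm~\ref{algo: search with two wireless}, the non-finder attempts exactly \emph{one} turn upon receiving the wireless message, never learns whether it succeeded, and simply slows down to a speed $s<1$; the finder reverses at the target (using it as anchor for Algorithm~\ref{algo:subroutine change direction}) and chases, so a meeting occurs in both branches, and the single parameter $s$ is what gets optimized. In your scheme the turner instead \emph{verifies} its turn against the origin (travel time $d$), halts upon detecting failure, and waits to be fetched. All ingredients you use (origin detection, halting, pick-up, forced turns with expected overhead $O(\gamma/(1-2p))$) are explicitly permitted by the model, so your mechanism is legitimate.

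The gap is the quantitative step. Your algorithm admits no parameter with the trade-off you invoke: the success branch costs exactly $3d+O(\gamma)$ (the turner reaches the origin at time $2d$ and the target at $3d$; the returning finder is met on the way and arrives no later, up to $O(\gamma)$ terms), and the failure branch costs $7d+O(\gamma)$ (the failed turner halts at distance $2d$ on the wrong side at time $2d$; the finder, sweeping back since time $\approx d$, reaches it at time $\approx 4d$; the joint return takes $3d$). Nothing in the success branch ``decreases in the parameter,'' so the balancing identity $\min_a\bigl(2pa+2(1-p)/a\bigr)=4\sqrt{p(1-p)}$ is not an analysis of your algorithm --- it is precisely the optimization behind the \emph{paper's} chase mechanism, where, writing $u=(1-s)/(1+s)$, the success and failure overheads are $2u$ and $2/u$ and $\min_u\bigl(2(1-p)u+2p/u\bigr)=4\sqrt{p(1-p)}$, attained at $s=\frac{1-2\sqrt{p(1-p)}}{1-2p}$. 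The fix, however, is immediate: computing your scheme directly gives expected ratio $(1-p)\cdot 3+p\cdot 7+O(\gamma/d)=3+4p+O(\gamma/d)$, and since $p\le\sqrt{p(1-p)}$ for all $p\in[0,1/2)$, this is at most $3+4\sqrt{p(1-p)}$; so your algorithm does prove the theorem (indeed a nominally stronger bound) once the unsupported balancing step is replaced by this direct computation. A secondary error: your concentration claim fails, because the realized time of your scheme is intrinsically bimodal --- about $3d$ with probability $1-p$ and about $7d$ with probability $p$ --- so it does not concentrate around its mean; only the $O(\gamma)$ forced-turn overheads have geometric tails. This does not affect the theorem as stated, since the competitive ratio in this paper is defined through expectations.
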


\vspace{-.5cm}

\begin{algorithm}
    \caption{Search with two wireless $p$-faulty agents} 
    \label{algo: search with two wireless}
    \begin{algorithmic}[1]
        \REQUIRE $p$-faulty agents with distinct roles of Leader and Follower, $s <1$ and $\gamma>0$.
        \STATE Agents search in opposite direction until target is found and reported. 
	 \STATE Target finder becomes Leader, and non-finder becomes Follower. 
        \STATE Non-finder changes speed to $s$, attempts a turn (that fails with probability), and continues moving until she meets with the finder. 
        \STATE Finder moves in same direction for $\gamma>0$ and runs Algorithm~\ref{algo:subroutine change direction} (target plays role of Follower), until the target is reached again. 
        \STATE Finder (Leader) continues until she meets the non-finder. 
        \STATE Non-finder (Follower) stays put until met by the Finder (Leader) again. 
        \STATE Leader continues moving in the same direction (away from the target and the Follower) for time $\gamma$ and then runs Algorithm~\ref{algo:subroutine change direction} with the Follower being the immobile agent, in order to turn. 
	\STATE When the Leader turns successfully, she picks up the Follower, and continuing  in the same direction, together, they move to the target. 
    \end{algorithmic}
\end{algorithm}
\vspace{-.5cm}

\begin{proof}[ (sketch) of Theorem~\ref{thm: 2 agents improved competitive}; See Section~\ref{sec: proof of thm: 2 agents improved competitive} for the details.]
The proof is given by the performance analysis of Algorithm~\ref{algo: search with two wireless} for a proper choice of speed $s<1$. In this simplified (sketch of) proof, we make the assumption that the target finder (using the target) as well as the two agents when walking together can make a deterministic turn. Indeed, using Algorithm~\ref{algo:subroutine change direction}, we show, in the full proof (Section~\ref{sec: proof of thm: 2 agents improved competitive}) how the actual probabilistically faulty turns have minimal impact in the competitive ratio.

We assume that the target is reported by the finder at time 1, when the distance of the two agents is 2. 
As for the non-finder, she turns successfully when she receives the wireless message with probability $1-p$. Since the finder moves towards her, their relative speed $1+s$. This means that they meet in additional time $2/(1+s)$, during which time the non-finder has moved closer to the target by $2s/(1+s)$. Hence, when the two agents meet, they are at distance $2-2s/(1+s)$ from the target. 

On the other hand with probability $p$ the non-finder fails to turn, and the two agents continue to move towards the same direction, only that the non-finder's new speed is $s$. So, their relative speed in this case is $1-s$. This means that they meet in additional time $2/(1-s)$, during which time the non-finder has moved further from the target by $2s/(1-s)$. Hence, when the two agents meet, they are at distance $2+sW+s(2+sW)/(1-s)$ from the target. Also recall that when they meet, they can make together a forced turn (that affects minimally the termination time), inducing this way total termination time (and competitive ratio, since the target was at distance 1)
$$
3+
p 
\left(
\frac{2}{1-s} +\frac{2s}{1-s}
\right)
+
(1-p) 
\left(
\frac{2}{1+s} -\frac{2s}{1+s}
\right)
=
\frac{5-s (s+4-8 p)}{1-s^2}
.$$
We choose $s=s(p) = \frac{1-2 \sqrt{p-p^2}}{1-2 p}$, the minimizer of the latter expression, that can be easily seen to attain values in $(0,1)$ for all $p \in (0,1/2)$, hence it is a valid choice for a speed.
Now we substitute back to the formula of the competitive ratio, and after we simplify algebraically, the expression becomes
$
3+4 \sqrt{(1-p) p}
$.
\qed \end{proof}

It is worthwhile noticing that the upper bound $4.59112$ of Theorem~\ref{thm: 2 agents 4.59112 competitive} holds in expectation, without being able to control the deviation. However, the upper bound of Theorem~\ref{thm: 2 agents improved competitive} holds again in expectation, but the resulting performance can be concentrated around the expectation with arbitrary precision. 
Moreover, the derived competitive ratio is strictly increasing in $p<1/2$, and ranges from 3 to 5. Hence, the drawback of Theorem~\ref{thm: 2 agents improved competitive} is that for high enough values of $p$ ($p> 0.197063$), the induced competitive ratio exceeds $4.59112$. 
Therefore, we have the incentive to choose either the algorithm of Theorem~\ref{thm: 2 agents improved competitive}, when
$p\leq 0.197063$, and the algorithm of Theorem~\ref{thm: 2 agents 4.59112 competitive} otherwise. 
It would be interesting to investigate whether a hybrid algorithm, combining the two ideas, could accomplish an improved result

\section{Conclusion}

In this paper we studied a new mobile agent search problem whereby an agent's ability to navigate in the search space exhibits probabilistic faults in that every attempt by the agent to change direction is an independent Bernoulli trial (the agent fails to turn with probability $p<1/2$). 
When searching with one agent, our best performing algorithm has optimal performance $4.59112$ as $p\rightarrow 0$, performance less that $9$ for $p\leq 0.436185$, and optimal performance up to constant factor and unbounded as $p\rightarrow 1/2$. 
When searching with two faulty agents, we provide 3 algorithms with different attributes. One algorithm has (expected) performance $9$ with arbitrary concentration, the other has performance $4.59112$, and finally one has performance $3+4\sqrt{p(1-p)}$ (ranging between 3 and 5) again with arbitrary concentration. 

It is rather surprising that even in this probabilistic setting with one searcher, we can design algorithms that outperform the well-known zig-zag algorithm for linear search whose competitive ratio is $9$, as well as that the problem with two searchers admits bounded competitive ratio for all $p\in (0,1/2)$, and unlike the one search problem. 
Interesting questions for further research could definitely arise in the study of similar, related ``probabilistic navigation'' faults either for their own sake or in conjunction with ``communication'' faults in more general search domains (e.g., in the plane or more general cow path with $w$ rays) and for multiple (possibly collaborating) agents.

\bibliographystyle{abbrv}

\bibliography{refs,refs1}

\begin{thebibliography}{10}

\bibitem{ahlswede1987search}
R.~Ahlswede and I.~Wegener.
\newblock {\em Search problems}.
\newblock John Wiley \& Sons, Inc., 1987.

\bibitem{alpern2006theory}
S.~Alpern and S.~Gal.
\newblock {\em The theory of search games and rendezvous}, volume~55.
\newblock Springer Science \& Business Media, 2006.

\bibitem{baezayates1993searching}
R.~Baeza-Yates, J.~Culberson, and G.~Rawlins.
\newblock Searching in the plane.
\newblock {\em Information and Computation}, 106(2):234--252, 1993.

\bibitem{baezayates1995parallel}
R.~Baeza-Yates and R.~Schott.
\newblock Parallel searching in the plane.
\newblock {\em Computational Geometry}, 5(3):143--154, 1995.

\bibitem{baston2001rendezvous}
V.~Baston and S.~Gal.
\newblock Rendezvous search when marks are left at the starting points.
\newblock {\em Naval Research Logistics (NRL)}, 48(8):722--731, 2001.

\bibitem{beck1964linear}
A.~Beck.
\newblock On the linear search problem.
\newblock {\em Israel J. of Mathematics}, 2(4):221--228, 1964.

\bibitem{beck1965more}
A.~Beck.
\newblock More on the linear search problem.
\newblock {\em Israel J. of Mathematics}, 3(2):61--70, 1965.

\bibitem{beck1970yet}
A.~Beck and D.~Newman.
\newblock Yet more on the linear search problem.
\newblock {\em Israel J. of Mathematics}, 8(4):419--429, 1970.

\bibitem{bellman1963optimal}
R.~Bellman.
\newblock An optimal search.
\newblock {\em SIAM Review}, 5(3):274--274, 1963.

\bibitem{bonato2021probabilistically}
A.~Bonato, K.~Georgiou, C.~MacRury, and P.~Pra{\l}at.
\newblock Algorithms for $p$-faulty search on a half-line.
\newblock {\em Algorithmica}, 2022.

\bibitem{chrobak2015group}
M.~Chrobak, L.~G{\k{a}}sieniec, G.~T., and R.~Martin.
\newblock Group search on the line.
\newblock In {\em SOFSEM 2015}, pages 164--176, Sn\v{e}\v{z}kou, Czech
  Republic, 2015. Springer.

\bibitem{czyzowicz2019groupkos}
J.~Czyzowicz, K.~Georgiou, and E.~Kranakis.
\newblock Group search and evacuation.
\newblock In {\em Distributed Computing by Mobile Entities}, pages 335--370.
  Springer, 2019.

\bibitem{czyzowicz2021search}
J.~Czyzowicz, K.~Georgiou, E.~Kranakis, D.~Krizanc, L.~Narayanan, J.~Opatrny,
  and S.~Shende.
\newblock Search on a line by byzantine robots.
\newblock {\em International Journal of Foundations of Computer Science}, pages
  1--19, 2021.

\bibitem{czyzowicz2021searchnew}
J.~Czyzowicz, R.~Killick, E.~Kranakis, and G.~Stachowiak.
\newblock Search and evacuation with a near majority of faulty agents.
\newblock In {\em SIAM Conference on Applied and Computational Discrete
  Algorithms (ACDA21)}, pages 217--227. SIAM, 2021.

\bibitem{czyzowicz2019search}
J.~Czyzowicz, E.~Kranakis, D.~Krizanc, L.~Narayanan, and J.~Opatrny.
\newblock Search on a line with faulty robots.
\newblock {\em Distributed Computing}, 32(6):493--504, 2019.

\bibitem{demaine2006online}
E.~Demaine, S.~Fekete, and S.~Gal.
\newblock Online searching with turn cost.
\newblock {\em Theoretical Computer Science}, 361(2):342--355, 2006.

\bibitem{GEORGIOU20231}
K.~Georgiou and J.~Lucier.
\newblock Weighted group search on a line \& implications to the priority
  evacuation problem.
\newblock {\em Theoretical Computer Science}, 939:1--17, 2023.

\bibitem{kao1996searching}
M.-Y. Kao, J.~H. Reif, and S.~R. Tate.
\newblock Searching in an unknown environment: An optimal randomized algorithm
  for the cow-path problem.
\newblock {\em Information and Computation}, 131(1):63--79, 1996.

\bibitem{mccabe1974searching}
B.~J. McCabe.
\newblock Searching for a one-dimensional random walker.
\newblock {\em Journal of Applied Probability}, 11(1):86--93, 1974.

\bibitem{stone1975theory}
D.~L. Stone.
\newblock {\em Theory of optimal search}.
\newblock Academic Press, Incorporated, 1975.

\bibitem{Sun2020}
X.~Sun, Y.~Sun, and J.~Zhang.
\newblock Better upper bounds for searching on a line with byzantine robots.
\newblock In {\em Complexity and Approximation}, pages 151--171. Springer,
  2020.

\end{thebibliography}

\newpage

\appendix

\section{Proof of Lemma~\ref{lem: unbounded cr}}
\label{sec: proof of lem: unbounded cr}

\begin{proof}[sketch of Lemma~\ref{lem: unbounded cr}]
Consider a trajectory and some moment at which the agent makes a turn while moving away from the origin, while at distance $d>0$. 
We calculate a lower bound for the time it takes the agent to return to the origin, which is also a lower bound for the time to expand the searched space in the opposite direction. 

Note that with probability $1-p$ the turn is successful, and the agent could return to the origin in time at least $d$. Also with probability $p$ the turn has failed, and the agent cannot know of the deviation to the intended trajectory earlier than time $d$ (because it needs time at least $d$ to reach the origin). In the latter event, the agent must reach distance $2d$ from the origin. Similarly, when the next turn is attempted, we have that with probability $1-p$ the agent needs at least time $2d$ to reach the origin, and will not know of the possible failed turn (with prob $p$) unless it reaches distance $4d$ from the origin, and so on. Therefore, the expected time for the agent to reach the origin is at least 
$
\sum_{i\geq0} (1-p)p^i 2^i d. 
$
The series is divergent when $p\geq 1/2$, concluding the proof of our claim. 
\qed \end{proof}

\section{Proof of Theorem~\ref{thm: deterministic simulates randomized}}
\label{sec: proof of thm: deterministic simulates randomized}

In order to prove Theorem~\ref{thm: deterministic simulates randomized} we will need the following lemma that shows how to amplify the probability of faults. 

\begin{lemma}
\label{lem: probability amplification}
For any $p\in (0,1)$, and for any $\epsilon>0$, there exists $\delta\in (0,\epsilon)$ such that a deterministic $p$-faulty agent can simulate a deterministic $q$-faulty agent, where $|q-1/2|\leq \delta$. 
\end{lemma}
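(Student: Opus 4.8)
The plan is to realize the desired ``$q$-faulty agent'' as a gadget that, on demand, produces independent copies of a single biased bit whose probability of indicating a \emph{failed} turn equals $q$, and to drive $q$ towards $1/2$ by XOR-amplification of the agent's own faults. First I would observe that a $p$-faulty agent already possesses a source of independent $\mathrm{Bernoulli}(p)$ bits: each attempted turn fails independently with probability $p$, and the agent can \emph{read off} the outcome of such an attempt using the origin as an anchor, exactly as in the mechanism described in Section~\ref{sec: model and problem definition} (advance a small distance $t_0$ from the origin, attempt a turn, and after an equal amount of time check whether the agent is back at the origin, which happens iff the turn succeeded). Between consecutive attempts the agent returns to the origin by repeatedly attempting to turn against it, recovering both its position and its orientation; hence it can generate a sequence $X_1,X_2,\dots$ of independent fault-bits with $\Pr[X_i=1]=1-p$. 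By choosing the scale $t_0$ small the whole procedure stays within an arbitrarily small neighbourhood of the origin, and for $p<1/2$ each bit costs finite expected time; the precise cost accounting is deferred to the proof of Theorem~\ref{thm: deterministic simulates randomized}, since the lemma asserts only feasibility of the simulation.

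Next comes the amplification. Given a target accuracy $\epsilon>0$, I would combine $k$ independent fault-bits by parity, $B:=X_1\oplus\cdots\oplus X_k$. Using the standard identity $\mathbb{E}\!\left[(-1)^{B}\right]=\prod_{i=1}^{k}\mathbb{E}\!\left[(-1)^{X_i}\right]=(1-2(1-p))^{k}=(2p-1)^{k}$, we get $\Pr[B=1]=\tfrac12\!\left(1-(2p-1)^{k}\right)$. Declaring a simulated turn to \emph{succeed} when $B=1$ and to \emph{fail} when $B=0$ yields failure probability $q:=\tfrac12\!\left(1+(2p-1)^{k}\right)$, so that $\lvert q-\tfrac12\rvert=\tfrac12\lvert 2p-1\rvert^{k}$. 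Since $p\in(0,1)$ gives $\lvert 2p-1\rvert<1$, this quantity tends to $0$ geometrically in $k$; I would pick $k$ large enough that $\tfrac12\lvert 2p-1\rvert^{k}<\epsilon$ and set $\delta:=\tfrac12\lvert 2p-1\rvert^{k}\in(0,\epsilon)$. Because each simulated turn consumes a \emph{fresh} block of $k$ fault-bits, the simulated outcomes are independent across turns, so the gadget genuinely reproduces the behaviour of a $q$-faulty agent; when $B$ dictates an actual change of direction, the physical turn can be forced with probability $1$ by repeated attempts detected against the origin.

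The elementary parity computation is the easy part; the real work — and the step I expect to be the main obstacle — is the geometric bookkeeping behind the bit generator: guaranteeing that the agent can reliably isolate and detect \emph{independent} turn outcomes while keeping itself localized near the origin, and handling runs of consecutive failed turns that drag the agent away before it can re-anchor. These excursions are almost surely finite and, for $p<1/2$, of finite expected length, and their spatial extent can be made negligible by shrinking $t_0$; making this precise (and keeping it disjoint from any concurrent search) is what a full proof must address, whereas the amplification itself is immediate.
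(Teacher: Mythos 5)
Your amplification identity is exactly the one in the paper: composing $k$ independent fault trials gives failure probability $q=\tfrac12\bigl(1+(2p-1)^k\bigr)$, so $\lvert q-1/2\rvert=\tfrac12\lvert 2p-1\rvert^k\to 0$ geometrically (the paper derives the same formula via the recurrence $q_k=(1-p)(1-q_{k-1})+p\,q_{k-1}$). However, the mechanism you build around this identity misses the idea that makes the lemma essentially trivial, and that miss creates a genuine gap. The paper's agent never observes any fault bit: to simulate a single $q$-faulty turn attempt, it simply attempts to turn $k$ consecutive times \emph{without moving in between}. The net effect of those $k$ physical Bernoulli trials is itself a physical Bernoulli trial --- direction unchanged with probability $q$ --- performed obliviously, in place, in zero time, at whatever point of the line the simulated agent happens to be, and for every $p\in(0,1)$. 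The logical outcome and the physical outcome coincide by construction, so no anchoring, no bit extraction, and no ``forced'' turns are needed; the simulation is exact.

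Your gadget instead reads individual fault bits against the origin, computes their parity, and then must \emph{enforce} the resulting logical decision by a physical turn. This is where the argument breaks down, and not merely in deferred ``bookkeeping.'' First, re-anchoring after a failed turn (both in bit generation and in enforcement) requires excursions whose length doubles with each consecutive failure, so the expected cost is of the form $\sum_i p^i 2^i\cdot\mathrm{const}$, which diverges for $p\ge 1/2$ --- yet the lemma is claimed for all $p\in(0,1)$, and the paper's in-place argument covers that full range effortlessly. Second, a $q$-faulty agent attempts turns at arbitrary points $x$ of its trajectory; your enforcement of a turn at $x$ can only be verified against an anchor, so it costs time proportional to $\lvert x\rvert$ times an unbounded geometric factor, distorting the simulated trajectory by amounts that cannot be made small by shrinking $t_0$ and that you do not bound. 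Since the lemma's purpose (in the proof of Theorem~\ref{thm: deterministic simulates randomized}) is to provide a faithful stand-in for a $q$-faulty agent, such distortions are not ignorable. The obstacle you flag as ``the real work'' is an artifact of your chosen mechanism, not of the lemma: composing the faults physically and obliviously removes it entirely, which is precisely why the paper's proof is only a few lines.
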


\begin{proof}
Consider a $p$-faulty agent, where $p \in (0,1)$. Suppose that its current direction is to the right, and it attempts to change direction. The probability that the direction remains the same is $p$. Call this $q_1$, where the index refers to 1 attempt to turn. 

Now consider $k\geq 2$ consecutive Bernoulli trials, in which the agent attempts to change direction $k$ consecutive times without moving in between the attempts. We compute the probability $q_k$ that the direction is still to the right. For this event to happen, either in the $k-1$ attempt the direction is to the left (which happens with probability $1-q_{k-1}$) and in the last trial the direction changes as attempted (with probability $1-p$), or in the $k-1$st attempt the direction is to the right (which happens with probability $q_{k-1}$) and in the last trial the direction does not change as attempted (with probability $p$). Hence we have the recurrence relation 
$$
q_k=(1-p)(1-q_{k-1})+p~q_{k-1}
$$
with initial condition $q_1=p$. The solution to the recurrence is 
$$
q_k=\frac{1}{2}\left((2 p-1)^k+1\right) .
$$
Since $p\in (0,1)$, we have that for any $\epsilon>0$, there exists a big enough $k\in \naturals$, such that $1/2-\epsilon<q_k<1/2+\epsilon$. 
\qed \end{proof}

We are not ready to prove Theorem~\ref{thm: deterministic simulates randomized}. 

\begin{proof}[of Theorem~\ref{thm: deterministic simulates randomized}]
By Lemma~\ref{lem: probability amplification} we may assume that we have a deterministic $q$-faulty agent, where $q$ is arbitrarily close to 1/2. Hence it is enough to show how a deterministic $q$-faulty agent can simulate the random choices of a randomized $p$-faulty agent (algorithm). 

Recall that the randomness of the $p$-faulty agent (algorithm) is induced by the access to an oracle sampling from the uniform distribution. Therefore, it is enough to show how to use the random faults of (an otherwise) deterministic $q$-faulty agent in order to sample from the uniform distribution. 
For this it is enough to show how to generate (nearly) uniform binary strings of arbitrary length. 

Indeed, fix any integer $k$ (that can be arbitrarily large; aiming for a string of length $k$). For any $\gamma>0$, we show how the deterministic $q$-faulty agent can generate a nearly uniform binary string of length $k$ without leaving further than $\gamma$ away from the origin.\footnote{Since we defined the competitive ratio resulting from placements of the exit arbitrarily away from the origin, $\gamma$ does not need to be small, but we keep it small for the sake of intution.}
For this the agent moves in any direction for time $\zeta>0$, for some $\zeta<\gamma/2^k$ small enough. Then it attempts a turn and moves for additional time $\zeta$ to either arrive back to the origin (with probability $1-q$) or not (with probability $q$). This way the agent has collected one random bit which is 1 with probability $q$. If the agent is at the origin, it repeats the process to collect more random bits. If not, then it is provably $2\zeta$ away from the origin. It then attempts to turn again, and moves for a total additional time $2\zeta$ in which case either it reaches the origin (with probability $1-q$) or not (with probability $q$), collecting another random bit. Note that it is important here that the probabilistic faults are independent Bernoulli trials. 
The agent continues the same way until it collects $k$ random bits, having moved at most $2^k \zeta<\gamma$ away from the origin. 

To summarize, we showed above how a deterministic $p$-faulty agent can (in an ad-hoc way) act as a deterministic $q$-faulty agent ($q$ being arbitrarily close to 1/2), and hence collecting any arbitrarily large nearly uniform binary string, and most importantly staying arbitrarily close to the origin. Using the random binary string, the agent can simulate any finite number of queries to the uniform distribution (with arbitrary precision). 
Hence it remains to have the deterministic $p$-faulty agent return to the origin in order to execute the trajectory (i.e. to simulate the algorithmic random choices) of the randomized $p$-faulty agent. From the previous step, the deterministic agent knows how far away from the origin is (if not already at the origin). Call this distance $w\leq \gamma$. Recall that $\gamma$ could have been chosen to be arbitrarily small, and it's value may also depend on $p$. 
The agent can make repetitive attempts to return to the origin, each time (if failing) doubling its distance to the origin. Indeed, after $l$ attempts, the probability that the agent has not arrived back to the origin is $p^l$, in which case it is at distance at most $2^l w<2^l \gamma$ from the origin. In that event, the $p$-faulty agent can run the deterministic algorithm of Theorem~\ref{thm: det algo limsup} whose competitive ratio we denote by $c^\textsc{det}$, inducing competitive ratio arbitrarily close to $c^\textsc{det}$ (because the initial point of the agent is arbitrarily close to the origin). In the complementary event, the agent has returned to the origin, with probability $1-p^l$. Hence the competitive ratio is
$
(1-q^l) c + q^l ~ c^\textsc{det}
$
which is arbitrarily close to $c$ as wanted. 
\qed \end{proof}

\section{Proof of Lemma~\ref{lem: cr deterministic in interval}}
\label{sec: proof of lem: cr deterministic in interval}

In this section we prove Lemma~\ref{lem: cr deterministic in interval}.
We emphasize that references to the direction of movement below (and in all sections) is something relative. Since the agent can distinguish the origin we use the direction of movement as going towards the left or right and we assume without loss of generality that on the right we have the positive points and to the left we have the negative points. However, the agent is actually agnostic of this fact since it just counts the time that has passed to compute its next move.

For any placement $n$ of the target, there is $t \in \naturals$ such that $n \in (g^t, g^{t+1}]$. For any initial intended expansion, we let $T$ be the random variable that is equal to the termination time of the algorithm for the target placement $n$.  
When the agent passes from the origin, we denote by $X$ the random variable that equals the intended expansion, so that the expected termination time of the given algorithm is $E[T|X=g^0]$.
We set 
\begin{align*}
R_k &:=E[T|X=g^k] \\
L_k &:=E[T|X=-g^k],
\end{align*}
that is $R_k, L_k$ are the expected termination time of the algorithm condition on that the agent will initially expand up to $g^k$ (moving to the right) and $-g^k$ (moving the left), respectively. Therefore we want to compute $R_0=E[T|X=g^0]$. 

Now suppose that the agent is at the origin and intends to expands up to $g^k$, with $k\leq t$. We have the following 
\begin{align*}
R_k &=
(1-p)
\left( 2g^k + L_{k+1}  \right)
+
(1-p)p
\left( 2g^{k+1} + L_{k+2}  \right)
+
\ldots \\
& ~~~+
(1-p) p^{t-k}
\left( 2g^t + L_{t+1}  \right) 
+
(1-p)\sum_{i=t-k+1}^\infty p^i n \\
&=
(1-p)
\sum_{i=0}^{t-k-1}p^i L_{k+i+1}
+(1-p)p^{t-k}L_{t+1}\\
& ~~~+(1-p)2g^k\sum_{i=0}^{t-k}(pg)^i
+
(1-p)\sum_{i=t-k+1}^\infty p^i n. 
\end{align*}

Note also that for all $k\geq t+1$, we have that $R_k=n$. 
With this observation, we have that for all $k\leq t+1$,

\begin{align*}
L_k &=
(1-p)
\left( 2g^k + R_{k+1}  \right)
+
(1-p)p
\left( 2g^{k+1} + R_{k+2}  \right)
+
(1-p)p^2
\left( 2g^{k+2} + R_{k+3}  \right)
+
\ldots \\
&= (1-p)2g^k 
\sum_{i=0}^\infty (pg)^i 
+
(1-p)
\sum_{i=0}^\infty p^i R_{k+1+i}  \\
&= (1-p)2g^k 
\sum_{i=0}^\infty (pg)^i 
+
(1-p)
\sum_{i=0}^{t-k-1} p^i R_{k+1+i}  
+
(1-p)
\sum_{i=t-k}^\infty p^i n
\end{align*}
which in particular implies that 
$$
L_{t+1}
=
(1-p)2g^{t+1} 
\sum_{i=0}^\infty (pg)^i 
+
(1-p)
\sum_{i=0}^\infty p^i n
=
\frac{2 (1-p) g^{t+1}}{1-g p}+n.
$$

Next we set, for $k=0,\ldots,t$, 
\begin{align}
\alpha_k := &
(1-p)p^{t-k}L_{t+1}
+(1-p)2g^k\sum_{i=0}^{t-k}(pg)^i
+
(1-p)\sum_{i=t-k+1}^\infty p^i n \notag \\
= &
(1-p)p^{t-k}\left( \frac{2 (1-p) g^{t+1}}{1-g p}+n \right)
+(1-p)2g^k\left( \frac{1-(g p)^{t-k+1}}{1-g p} \right) 
+
n p^{t-k+1} \notag 
\\
= &
p^{t-k} \frac{2 (1-p)^2 g^{t+1}}{1-g p}
+(1-p)2g^k\left( \frac{1-(g p)^{t-k+1}}{1-g p} \right) 
+
n p^{t-k} \notag\\
= &
\frac{2 (1-p)g^k }{1-g p}\left(1 + (1-2 p)g (pg)^{t-k}\right)
+
n p^{t-k} \label{equa: alphak} \\
\beta_k := &
(1-p)2g^k 
\sum_{i=0}^\infty (pg)^i 
+
(1-p)
\sum_{i=t-k}^\infty p^i n \notag \\
= & 
\frac{2 (1-p) g^k}{1-g p}+n p^{t-k}, \label{equa: betak}
\end{align}
as well as we also set $\alpha^T=(\alpha_0, \ldots, \alpha_t), \beta^T=(\beta_0, \ldots, \beta_t)$. 

With that notation in mind, we have that $R_0, \ldots, R_t, L_0, \ldots, L_{t}$ satisfy the following linear system of $2(t+1)$ constraints
\begin{align}
R_k - (1-p)
\sum_{i=0}^{t-k-1}p^i L_{k+i+1}
=& \alpha_k, ~~k=0,\ldots,t, \label{equa: R for L}\\
L_k - (1-p)
\sum_{i=0}^{t-k-1} p^i R_{k+1+i}  
=& \beta_k, ~~k=0,\ldots,t. \label{equa: L for R}
\end{align}
Next we define matrix $P \in \reals^{(t+1)\times(t+1)}$ with $P_{i,j}=-(1-p)p^{j-i-1}$ if $j>i$ and $0$ otherwise, that is
$$
P=
-(1-p) \left(
\begin{array}{ccccccc}
0 & p^0 & p^1 & p^2 & \ldots & p^{t-2} & p^{t-1}  \\
0 & 0   & p^0 & p^1  & \ldots & p^{t-3} & p^{t-2}  \\
\vdots & \vdots & \vdots & \vdots & \ddots & \vdots & \vdots \\
0 & 0 & 0 & 0 & \ldots & p^0 & p^1 \\
0 & 0 & 0 & 0 & \ldots & 0 & p^0 \\
0 & 0 & 0 & 0 & \ldots & 0 & 0
\end{array}
\right),
$$
so that the linear system~\eqref{equa: R for L},~\eqref{equa: L for R} can be written in the matrix form $Ax=b$, where 
\begin{equation}
\label{equa: matrix A}
A:=
\left(
\begin{array}{cc}
I_{t+1} & P \\
P & I_{t+1}
\end{array}
\right),
\end{equation}
$I_{t+1}$ denotes the $(t+1)\times(t+1)$ identity matrix, $x^T=(R_0, R_1, \ldots, R_t, L_0, L_1, \ldots, L_t)$, and $b^T=(\alpha^T,\beta^T)$. 
For simplicity, we abbreviate $I_{t+1}$ by $I$. 

\begin{lemma}
\label{lem: inverse of block}
If matrix $I-P^2$ is invertible, then block matrix $A$ is invertible and  
$$
A^{-1}=
\left(
\begin{array}{cc}
(I-P^2)^{-1} & - (I-P^2)^{-1}P \\
P(I-P^2)^{-1} & I + P(I-P^2)^{-1}P
\end{array}
\right)
$$
\end{lemma}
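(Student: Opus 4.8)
The plan is to verify the claimed formula directly, by multiplying $A$ against the proposed matrix block-by-block and checking that the product equals $I_{2(t+1)}$; since $A$ is a square matrix over a field, exhibiting a right inverse already suffices to conclude it is the two-sided inverse. Equivalently, one may run the computation as a block Gaussian elimination, where the Schur complement of the lower-right block $I_{t+1}$ is exactly $I-P^2$, which is precisely what forces $(I-P^2)^{-1}$ to appear; this reorganization both derives and verifies the formula. Throughout I would abbreviate $Q:=(I-P^2)^{-1}$, which exists by hypothesis.

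The key enabling fact, which I would isolate first as a standalone observation, is that $P$ commutes with $Q$. Indeed, $P$ trivially commutes with $P^2$ and hence with $I-P^2$; and whenever a matrix commutes with an invertible matrix $M$ it also commutes with $M^{-1}$ (left- and right-multiply $PM=MP$ by $M^{-1}$). Hence $PQ=QP$. In our setting $P$ is in fact strictly upper triangular, so every matrix in sight is a polynomial in $P$ and this commutativity is automatic, but the one-line argument above needs only the invertibility hypothesis and keeps the proof self-contained.

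With commutativity in hand, each of the four blocks of the product collapses using only the defining identities $(I-P^2)Q=Q(I-P^2)=I$. The two diagonal blocks reduce to $I_{t+1}$: the top-left block is $I\cdot Q + P\cdot(\ldots)$, which after moving $P$ past $Q$ rearranges to $(I-P^2)Q=I$, while the bottom-right block simplifies to $I$ after an internal cancellation of a $\pm PQP$ pair. The off-diagonal blocks must vanish, and this is the only place requiring genuine care: there one uses $PQ=QP$ to bring the expression into the form $P-(I-P^2)QP$ (and its mirror) and then invokes $Q(I-P^2)=I$ to cancel. I expect this off-diagonal bookkeeping — keeping the order of $P$ and $Q$ straight and tracking signs so that both off-diagonal blocks reduce to the zero matrix — to be the main (and essentially only) obstacle; everything else follows immediately from the two defining identities for $Q$ together with commutativity.
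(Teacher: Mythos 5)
Your strategy (direct block multiplication, aided by the observation that $P$ commutes with $Q:=(I-P^2)^{-1}$) is exactly the paper's approach --- the paper's entire proof is the sentence ``Straightforward matrix multiplication.'' But the verification you describe cannot be carried out, because the formula as printed in the lemma is false: the bottom-left block of $A^{-1}$ must be $-P(I-P^2)^{-1}$, not $+P(I-P^2)^{-1}$. Concretely, write $M$ for the stated matrix and use the identity $P^2Q=QP^2=Q-I$ (which follows from $(I-P^2)Q=I$). Then the top-left block of $AM$ is $I\cdot Q+P\cdot(PQ)=(I+P^2)Q=2Q-I$, which equals $I$ only if $P^2=0$ (false here once $t\geq 2$); in the other order, the bottom-left block of $MA$ is $PQ+(I+PQP)P=PQ+P+P(Q-I)=2PQ\neq 0$. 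So the sign bookkeeping in the off-diagonal blocks --- the very point you flagged as the main obstacle --- is precisely where the claimed collapse fails, and commutativity cannot repair a sign. Note that your own alternative route already exposes the error: block Gaussian elimination with Schur complement $S=I-P^2$ produces the $(2,1)$ block $-A_{22}^{-1}A_{21}S^{-1}=-PS^{-1}$, with the minus sign, in disagreement with the lemma as stated.

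What your argument does prove, essentially verbatim, is the corrected identity
$$
A^{-1}=\left(
\begin{array}{cc}
(I-P^2)^{-1} & -(I-P^2)^{-1}P \\
-P(I-P^2)^{-1} & I+P(I-P^2)^{-1}P
\end{array}
\right),
$$
for which all four blocks of $AM$ do collapse as you describe: the top-left is $Q-P^2Q=I$; the top-right is $-QP+P+P^2QP=-QP+P+(Q-I)P=0$; the bottom-left is $PQ-PQ=0$; and the bottom-right is $-PQP+I+PQP=I$. The slip originates in the paper, not in your strategy, and it happens to be harmless downstream: the later computation of $R_0$ reads only the first block-row of $A^{-1}$, i.e.\ the blocks $(I-P^2)^{-1}$ and $-(I-P^2)^{-1}P$, which are correct as stated. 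Still, a proof of the lemma as literally printed does not exist, so your write-up should prove the sign-corrected formula (or state the correction explicitly) rather than assert that the stated one checks out.
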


\begin{proof}
Straightforward matrix multiplication. 
\qed \end{proof}
Our goal next is to compute $A^{-1}$. We find the inverse in block form. 
\begin{lemma}
\label{lem: inverse of I-P2}
Let $\delta_1=1$ and 
$$
\delta_m=\frac{1-p}2 \left( 
1 - (-1 + 2 p)^{m-2}
\right),
~\textrm{for all}~m\geq 2.
$$
Then matrix $B:=I-P^2$ is invertible, and 
$$
B_{i,j}^{-1}
=
\left\{
\begin{array}{ccccc}
0 &~\textrm{if}~ i>j \\
\delta_{j-i+1} &~\textrm{if}~ j\geq i \\
\end{array}
\right.
$$
\end{lemma}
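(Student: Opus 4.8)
The plan is to exploit that $P$ is a polynomial in the nilpotent shift operator, which collapses the entire block computation to a one-variable power-series manipulation. Let $S$ denote the shift with $(S^m)_{i,j}=1$ iff $j=i+m$ (so $S^{t+1}=0$), and note
\[
P=-(1-p)\sum_{m\geq 1}p^{m-1}S^m=-(1-p)\,S(I-pS)^{-1},
\]
where the inverse is legitimate because $S$ is nilpotent. Every object in sight then lives in the commutative ring $\reals[S]\cong\reals[x]/(x^{t+1})$, and I would perform all manipulations there, reading off the $(i,i+m)$ entry of a matrix as the coefficient of $S^m$. This is well defined since all matrices involved are constant along diagonals (Toeplitz), and it instantly shows $B^{-1}$ is upper triangular, which discharges the $i>j$ case of the claim for free.

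First I would dispose of invertibility: since $P$ is strictly upper triangular it is nilpotent, hence $P^2$ is nilpotent and $B=I-P^2$ is invertible with $B^{-1}=\sum_{k\geq 0}P^{2k}$. To get the explicit entries, I would substitute the rational expression for $P$ and simplify, the key step being a factorization:
\[
I-P^2=(I-pS)^{-2}\bigl[(I-pS)^2-(1-p)^2S^2\bigr]=(I-pS)^{-2}(I-S)\bigl(I-(2p-1)S\bigr),
\]
using $(I-pS)^2-(1-p)^2S^2=I-2pS+(2p-1)S^2$, which factors as $(I-S)(I-(2p-1)S)$ since the roots of $z^2-2pz+(2p-1)$ are $1$ and $2p-1$. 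Inverting gives $B^{-1}=(I-pS)^2(I-S)^{-1}\bigl(I-(2p-1)S\bigr)^{-1}$.

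Next I would extract coefficients. Writing $q:=2p-1$ and using $(I-pS)^2-(I-S)(I-qS)=(1-p)^2S^2$, the quotient collapses to
\[
B^{-1}=I+\frac{(1-p)^2S^2}{(I-S)(I-qS)}=I+(1-p)^2\sum_{m\geq 2}\frac{1-q^{m-1}}{1-q}\,S^m,
\]
where the geometric expansion $\tfrac{1}{(I-S)(I-qS)}=\sum_{n\geq 0}\tfrac{1-q^{n+1}}{1-q}S^n$ follows from the partial fraction $\tfrac{1}{(1-x)(1-qx)}=\tfrac{1}{1-q}\bigl(\tfrac{1}{1-x}-\tfrac{q}{1-qx}\bigr)$, valid since $1-q=2(1-p)\neq 0$. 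Since $\tfrac{(1-p)^2}{1-q}=\tfrac{1-p}{2}$, the coefficient of $S^m$ is $1$ for $m=0$, is $0$ for $m=1$, and is $\tfrac{1-p}{2}\bigl(1-(2p-1)^{m-1}\bigr)$ for $m\geq 2$. Matching these against the claimed $B^{-1}_{i,i+m}=\delta_{m+1}$ — i.e. $\delta_1=1$, $\delta_2=\tfrac{1-p}{2}(1-(2p-1)^0)=0$, and $\delta_{m+1}=\tfrac{1-p}{2}(1-(2p-1)^{m-1})$ for $m\geq 2$ — closes the argument.

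I expect the only delicate point to be the index bookkeeping: keeping the off-by-one between ``coefficient of $S^m$'' and $\delta_{m+1}$ straight, and treating the boundary cases $m\in\{0,1\}$ separately from the generic $m\geq 2$ formula. A computation-free alternative, which I would keep only as a sanity check, is to verify $B\,B^{-1}=I$ directly through the diagonal convolution $\sum_{l=0}^m b_l\,c_{m-l}=[m=0]$, where $b_0=1$, $b_1=0$, $b_l=-(1-p)^2(l-1)p^{l-2}$ are the entries of $B$ and $c_m=\delta_{m+1}$; this reduces to a single geometric-sum identity but is messier than the generating-function route.
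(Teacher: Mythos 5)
Your proof is correct, and it takes a genuinely different route from the paper's. The paper first computes the entries of $B=I-P^2$ directly ($B_{i,i}=1$, $B_{i,i+1}=0$, and $B_{i,j}=-(j-i-1)(1-p)^2p^{j-i-2}$ for $j\geq i+2$) and then \emph{verifies} that the claimed matrix is the inverse by checking $B\,B^{-1}=B^{-1}B=I$ via induction on $t$, reducing each step to showing that the boundary row and column of the product vanish off the diagonal; even there, the computation is carried out only for one representative case. You instead \emph{derive} the inverse: embedding everything in the commutative ring $\reals[S]\cong\reals[x]/(x^{t+1})$ generated by the nilpotent shift, writing $P=-(1-p)S(I-pS)^{-1}$, factoring $I-P^2=(I-pS)^{-2}(I-S)\bigl(I-(2p-1)S\bigr)$, and inverting by partial fractions. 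The algebra checks out: the factorization $I-2pS+(2p-1)S^2=(I-S)\bigl(I-(2p-1)S\bigr)$, the identity $(I-pS)^2-(I-S)\bigl(I-(2p-1)S\bigr)=(1-p)^2S^2$, the expansion of $1/\bigl((1-x)(1-qx)\bigr)$ with $q=2p-1$, and the constant $(1-p)^2/(1-q)=(1-p)/2$ are all right, and the coefficients of $S^0$, $S^1$, and $S^m$ for $m\geq 2$ match $\delta_1$, $\delta_2$, and $\delta_{m+1}$ exactly. Your route buys several things the paper's does not: it produces the formula rather than assuming it, it handles all entries uniformly with no induction or boundary-case analysis, and the upper-triangular Toeplitz structure (hence the $i>j$ case) comes for free from the observation that $B^{-1}=\sum_{k\geq 0}P^{2k}$ is a polynomial in $S$. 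The paper's method is more elementary (plain matrix multiplication) but, as written, less complete. The only points worth making explicit in your write-up are the ones you already flag: all inverses such as $(I-pS)^{-1}$ are legitimate in $\reals[x]/(x^{t+1})$ because $S$ is nilpotent, and the division by $1-q=2(1-p)$ is safe since $p<1$.
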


\begin{proof}[Sketch]
It is easy to see that $\left(I-P^2\right)_{i,i}=1$, and that $\left(I-P^2\right)_{i,i+1}=0$. For all $j\geq i+2$ we have that 
\begin{align*}
\left(
I-P^2
\right)_{i,j}
&=
-\left(
P^2
\right)_{i,j}
=
-\sum_{r=1}^{t+1}P_{i,r} P_{r,j}
=
-\sum_{r=i+1}^{j-1}P_{i,r} P_{r,j} \\
& =
-(1-p)^2 \sum_{r=i+1}^{j-1} p^{r-i-1} p^{j-r-1}
=
-(j-i-1) (1-p)^2 p^{j-i-2}.
\end{align*}

Next we show that 
$\left(I-P^2\right)B=B\left(I-P^2\right)=I$. 
The proof is by induction on $t$ (recall that $B \in \reals^{(t+1)\times(t+1)}$).

The case of $t=1$ is straightforward since then we have $B=I-P^2=I$. 
For the case $t\geq 2$ (when $B \in \reals^{(t+1)\times(t+1)}$), we note that 
the principal minor of $B$ is the same as the matrix induced by taking $t\leftarrow t-1$. 

Hence, in order to show that $B \left(I-P^2\right)=I$, it is enough to prove, for all $r=0, \ldots,t$, that
$$
\left(  B \left(I-P^2\right)  \right)_{r,t+1}
=
\left(  B \left(I-P^2\right)  \right)_{t+1,r}
=
\left( \left(I-P^2\right)  B \right)_{r,t+1}
=
\left(  \left(I-P^2\right) B \right)_{t+1,r}
=0, 
$$
as well as that 
$$
\left(  B \left(I-P^2\right)  \right)_{t+1,t+1}
=
\left(  \left(I-P^2\right)B  \right)_{t+1,t+1} = 1.
$$

We show the first equality when $r=1$, and the rest admit very similar calculations. Indeed, for $r=1$ we have
\begin{align*}
\left(
B \left(I-P^2\right)
\right)_{1,t+1}
=&
\sum_{r=1}^{t+1}B_{1,r} (I-P^2)_{r,t+1} \\
= &
-\sum_{r=1}^{t-1}B_{1,r} (t-r) (1-p)^2 p^{t-r-1} + B_{1,t+1}\\
= &
- (t-1) (1-p)^2 p^{t-1-1}\\
& -
\sum_{r=3}^{t-1}\frac{1-p}2 \left( 
1 - (-1 + 2 p)^{r-2}
\right)
 (t-r) (1-p)^2 p^{t-r-1}\\
& +
 \frac{1-p}2 \left( 
1 - (-1 + 2 p)^{t-1}
\right) \\
= & ~0.
\end{align*}
~
\qed 
\end{proof}

The following is an immediate corollary of Lemma~\ref{lem: inverse of I-P2}.

\begin{corollary}
\label{cor: inverse of I-P2 times B}
$\left( (I-P^2)^{-1}P \right)_{1,1}=0$ and for $r=2, \ldots, t+1$ we have $\left( (I-P^2)^{-1}P \right)_{1,r} =-\frac{1}{2} (1-p) \left((2 p-1)^{r-2}+1\right)$.
\end{corollary}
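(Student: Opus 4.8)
The plan is to compute the $(1,r)$ entry of the product $(I-P^2)^{-1}P$ directly, using the two explicit descriptions already in hand: from Lemma~\ref{lem: inverse of I-P2} the first row of $(I-P^2)^{-1}=B^{-1}$ is given by $B^{-1}_{1,k}=\delta_k$ for $k\geq 1$, and from the definition of $P$ we have $P_{k,r}=-(1-p)p^{r-k-1}$ when $r>k$ and $P_{k,r}=0$ otherwise. Multiplying out the first row, only the terms with $k<r$ survive, so $\left((I-P^2)^{-1}P\right)_{1,r}=-(1-p)\sum_{k=1}^{r-1}\delta_k\,p^{r-k-1}$. When $r=1$ this sum is empty and the entry is $0$, which disposes of the first claim.

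For $r\geq 2$ I would reduce the statement to the identity $\sum_{k=1}^{r-1}\delta_k\,p^{r-k-1}=\tfrac12\big((2p-1)^{r-2}+1\big)$, after which multiplying by $-(1-p)$ gives exactly the asserted value. To prove this identity I would first peel off the $k=1$ term (where $\delta_1=1$ contributes $p^{r-2}$), then substitute the closed form $\delta_k=\tfrac{1-p}{2}\big(1-(2p-1)^{k-2}\big)$ for $k\geq 2$ and reindex by $j=k-2$. The remaining sum splits into a plain finite geometric series $\sum_{m=0}^{r-3}p^{m}=\tfrac{1-p^{r-2}}{1-p}$ and the mixed sum $\sum_{j=0}^{r-3}(2p-1)^j p^{r-3-j}$, which I would evaluate via the standard identity $\sum_{j=0}^{n}a^jb^{n-j}=\tfrac{a^{n+1}-b^{n+1}}{a-b}$ with $a=2p-1$, $b=p$, $n=r-3$, giving $\tfrac{(2p-1)^{r-2}-p^{r-2}}{(2p-1)-p}$. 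Substituting both closed forms, the prefactor $\tfrac{1-p}{2}$ cancels against the denominators $1-p$ and $(2p-1)-p=-(1-p)$, and the leftover $p^{r-2}$ terms cancel, leaving precisely $\tfrac12\big((2p-1)^{r-2}+1\big)$.

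The only real work is this algebraic consolidation, and the main obstacle I anticipate is purely bookkeeping: keeping the exponents aligned through the reindexing and correctly tracking the sign of $(2p-1)-p=-(1-p)$ so that the cancellations go through. As sanity checks I would verify the two smallest cases: for $r=2$ the sum is just $\delta_1 p^0=1$, giving entry $-(1-p)$, which matches $-\tfrac12(1-p)(1+1)$; and for $r=3$, since $\delta_2=\tfrac{1-p}{2}(1-(2p-1)^0)=0$, only the $k=1$ term survives and the entry is $-(1-p)p$, matching $-\tfrac12(1-p)(2p-1+1)$. These confirm the indexing is correct before carrying out the general simplification.
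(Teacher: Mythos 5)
Your proposal is correct and follows essentially the same route as the paper: both expand $\left((I-P^2)^{-1}P\right)_{1,r}$ as the first row of $B^{-1}$ (from Lemma~\ref{lem: inverse of I-P2}) against the $r$-th column of $P$, and then reduce the resulting sum via geometric-series identities. The only difference is that you carry out the final consolidation (splitting into $\sum p^m$ and $\sum (2p-1)^j p^{r-3-j}$ and tracking the cancellation of the $p^{r-2}$ terms) explicitly, whereas the paper states the simplified closed form directly after checking the small cases.
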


\begin{proof}
Using Lemma~\ref{lem: inverse of I-P2} we calculate 
\begin{align*}
\left( (I-P^2)^{-1}P \right)_{1,r}
=&
\sum_{i=1}^{t+1}
\left( (I-P^2)^{-1}\right)_{1,i}
P_{i,r} \\
= &
P_{1,r}
-\frac12(1-p)^2
\sum_{i=3}^{r-1}
\left( 
1 - (-1 + 2 p)^{i-2}
\right)
p^{r-i-1}
\end{align*}
It follows that 
\begin{align*}
\left( (I-P^2)^{-1}P \right)_{1,1} =& P_{1,1}=0 \\
\left( (I-P^2)^{-1}P \right)_{1,2} =& P_{1,2}=-(1-p) \\
\left( (I-P^2)^{-1}P \right)_{1,3} =& P_{1,3}=-p(1-p),
\end{align*}
and for all $i\geq 4$, 
\begin{align*}
\left( (I-P^2)^{-1}P \right)_{1,i}
&=
P_{1,r}
-\frac12(1-p)^2
\sum_{i=3}^{r-1}
\left( 
1 - (-1 + 2 p)^{i-2}
\right)
p^{r-i-1} \\
&=
-\frac{1}{2} (1-p) \left((2 p-1)^{r-2}+1\right).
\end{align*}
\qed \end{proof}

We are ready to calculate $R_0$. By~\eqref{equa: matrix A} and Lemma~\ref{lem: inverse of block}, we have that 
\begin{align*}
R_0  & = 
 A^{-1}  
\left(
\begin{array}{c}
\alpha \\
\beta
\end{array}
\right) \\
& =
\sum_{j=1}^{t+1} \left( (I-P^2)^{-1} \right)_{1,j} \alpha_{j-1} 
-
\sum_{j=1}^{t+1} \left( (I-P^2)^{-1} P \right)_{1,j} \beta_{j-1}  \\
& =
\alpha_0
+
\sum_{j=2}^{t+1} \left( (I-P^2)^{-1} \right)_{1,j} \alpha_{j-1} 
-
\sum_{j=2}^{t+1} \left( (I-P^2)^{-1} P \right)_{1,j} \beta_{j-1}. 
\end{align*}
Now, using Lemma~\ref{lem: inverse of I-P2}, and Corollary~\ref{cor: inverse of I-P2 times B}, we continue the calculations of the latest expression as
\begin{align*}
R_0  & = 
\alpha_0
+
\frac{1-p}2
\left(
\sum_{j=2}^{t+1} 
\left( 
1 - (2 p-1)^{j-2}
\right)
 \alpha_{j-1} 
+
\sum_{j=2}^{t+1} 
\left(1+(2 p-1)^{j-2}\right)
\beta_{j-1}. 
\right) \\
& = 
\alpha_0
+
\frac{1-p}2
\left(
\sum_{j=2}^{t+1} 
\left(\alpha_{j-1}+\beta_{j-1}\right)
+
\sum_{j=2}^{t+1} 
(2 p-1)^{j-2}
\left(
\beta_{j-1}-\alpha_{j-1}
\right)
\right)
\end{align*}
Using now~\eqref{equa: alphak} and~\eqref{equa: betak} and expanding on the formula before, we have that 
\begin{equation} \label{eq:r zero}
    R_0 =
    \frac
    {
    (1 - p) g^{t+1} \left(    
    1+
    (1-2 p) \left(g+(g-1) (-1+2 p)^t \right)
    \right)
    }
    {(g-1) (1-g p)}
    - 
    \frac{2 (1-p)  }
    {g-1}
    +
    n.    
\end{equation}
We observe that for all $t$ and $g\geq 2$, we have $g+(g-1) (-1+2 p)^t \geq 0$. Therefore,
\begin{align*}
\sup_{n \in (g^t, g^{t+1}]} \frac{R_0}{n} 
& = \frac{R_0}{g^t} \\
& \leq 
\frac{
(1 - p) g 
}
{(g-1) (1-g p)}
\left(    
1+
(1-2 p) \left(g+(g-1) (-1+2 p)^t \right)
\right)
+
1
\\
\ignore{
&=
\frac{
(1 - p) g 
\left(    
1+
(1-2 p) g
\right)
}
{(g-1) (1-g p)}
+
1
+
\frac{
(1 - p) (1-2p) g 
}
{(1-g p)}
\left(-1+2 p\right)^t
\\
}
&=
\frac{
(1 - p) g 
}
{(1-g p)}
\left(
\frac{
1+
(1-2 p) g
}
{
g-1
}
-
\left(-1+2 p\right)^{t+1}
\right)
+1
\\
&\leq
\frac{
(1 - p) g 
}
{(1-g p)}
\left(
\frac{
1+
(1-2 p) g
}
{
g-1
}
+
\left(1-2 p\right)^{t+1}
\right)
+1
\end{align*}
as promised. 

\section{Proof of Lemma~\ref{lem: cr randomized in interval}}
\label{sec: proof of lem: cr randomized in interval}


As in the proof of Lemma~\ref{lem: cr deterministic in interval} we introduce the following notation (and right, e.g., $R(x)$, instead of $R_x$ so as to distinguish the proofs). 
\begin{definition}
    Let $R(x)$ be the expected termination time when moving to the right and the initial intended turning point is $g^{x+\epsilon}$, where $x$ is a non-negative integer.
\end{definition}

\begin{definition}
    Let $L(x)$ be the expected termination time when moving to the left and the initial intended turning point is $g^{x+\epsilon}$, where $x$ is a non-negative integer.
\end{definition}

Given the above definitions and that initially the agent starts moving to the left or right with probability $1/2$ we can compute the total expected termination time as 
$$\frac{1}{2}L(0) + \frac{1}{2}R(0).$$
Then, we will be able to find the expected competitive ratio by dividing with the distance of the (worst placed) target (arbitrarily away) from the origin.

We now assume that the target is placed to the right of the origin\footnote{If the target is to the left, then the analysis is analogous.} at $n = g^{t+\delta}$, where $t$ is a non-negative integer and $\delta \in (0,1]$. This gives us two possible cases of when the agent will reach the target. One is when $\epsilon < \delta$ and the other when $\epsilon \geq \delta$, where $\epsilon$ is random algorithmic choice sampled from $[0,1]$ uniformly at random. 

As a result, we is natural to associate $L(x)$ with random variables $L_{\epsilon < \delta}$ and $L_{\epsilon \geq \delta}$, as well as $R(x)$ with $R_{\epsilon < \delta}$ and $L_{\epsilon \geq \delta}$, corresponding to the two cases $\epsilon < \delta$ (see Section~\ref{sec: epsilon < delta}) and $\epsilon \geq \delta$  (see Section~\ref{sec: epsilon >= delta}), respectively. In this case we have 
\begin{align*}
L(0) &= \int_{0}^{\delta} L_{\epsilon < \delta}(0) + \int_{\delta}^{1} L_{\epsilon \geq \delta}(0), \\
R(0) &= \int_{0}^{\delta} R_{\epsilon < \delta}(0) + \int_{\delta}^{1} R_{\epsilon \geq \delta}(0).
\end{align*}

\subsection{First Case ($\epsilon < \delta$)}
\label{sec: epsilon < delta}

Functions $L,R$ in this section are assumed to refer specifically to the case that $\epsilon < \delta$, i.e. their are abbreviations of $L_{\epsilon < \delta}$  and $R_{\epsilon < \delta}$, respectively. 
Next we represent $R(x), L(x)$ recursively. Indeed, we have 

\begin{align*}
    R(t+1) &= n, ~~~~\textit{(Since intended turning point is } g^{t+1+\epsilon} > g^{t+\delta} = n\textrm{)} \\
    R(t)   &= pR(t+1) + (1-p)(2g^{t+\epsilon} + L(t+1)) \\
           &= pn + (1-p)(2g^{t+\epsilon} + L(t+1)) \\
    R(t-1) &= pR(t) + (1-p)(2g^{t-1+\epsilon} + L(t)) \\
           &= p^2n + (1-p)p(2g^{t+\epsilon} + L(t+1)) + (1-p)(2g^{t-1+\epsilon} + L(t))
\end{align*}
Moreover, for any $x \geq t+1$ we have that $R(x) = n$ because the intended turning point is located at a greater distance than the target, hence the target is found before reaching the intended turning point.

For $x=k$ for some $k<t+1$ we distinguish two cases. One case is that the change of direction succeeds (this happens with probability $1-p$). In that case the agent moves up to $g^{k+\epsilon}$ and back to the origin and from there we need to add the expected termination time of $L(k+1)$. The other case is that the change of direction fails (this happens with probability $p$). Then, the agent moves to the next intended turning point (i.e., $g^{k+1+\epsilon}$). This case is identical to the scenario of the agent moving to $g^{k+1+\epsilon}$ which is actually $R(k+1)$. Therefore, the expected termination time moving to the right, in the general case of $x=k$ for some $k<t+1$, is
$$R(k) = pR(k+1) + (1-p)(2g^{k+\epsilon} + L(k+1)).$$
By expanding this recursive expression we get the following recursive expression that depends only on $L(x)$.
\begin{equation} \label{eq: expected time right}
    R(k) = np^{t+1-k} + (1-p)2g^{k+\epsilon}\sum_{i=0}^{t-k}(pg)^i + (1-p)\sum_{i=0}^{t-k}p^iL(k+1+i)
\end{equation}

Following a similar argument, we compute $L(x)$. The difference is that when the agent moves to the left there is no target, hence the agent will move until it actually changes direction. When the agent changes direction it reaches the origin and from there we add the expected time for moving to the right which is $R(x)$ for some $x$ based on the distance covered from the left.

\begin{align*}
    L(t+1) &= (1-p)(2g^{t+1+\epsilon} + R(t+2)) + (1-p)p(2g^{t+2+\epsilon} + R(t+3)) + \ldots \\ 
         &= (1-p)2g^{t+1+\epsilon}\sum_{i=0}^{\infty}(pg)^i + (1-p)\sum_{i=0}^{\infty}p^i R(t+2+i) \\
         &= \frac{(1-p)2g^{t+1+\epsilon}}{1-pg} + (1-p)\sum_{i=0}^{\infty}p^i n \\
         &= \frac{(1-p)2g^{t+1+\epsilon}}{1-pg} + n
\end{align*}
Similarly we have
\begin{align*}
    L(t)   &= (1-p)(2g^{t+\epsilon} + R(t+1)) + (1-p)p(2g^{t+1+\epsilon} + R(t+2)) + \ldots \\
           &= (1-p)2g^{t+\epsilon}\sum_{i=0}^{\infty}(pg)^i + (1-p)\sum_{i=0}^{\infty}p^i R(t+1+i) \\
    L(t-1) &= (1-p)(2g^{t-1+\epsilon} + R(t)) + (1-p)p(2g^{t+\epsilon} + R(t+1)) + \ldots \\ 
           &= (1-p)2g^{t-1+\epsilon}\sum_{i=0}^{\infty}(pg)^i + (1-p)\sum_{i=0}^{\infty}p^i R(t+i).
\end{align*}
From the above we  derive a general expression of $L(k)$ for some $k<t+1$ as follows
\begin{align*}
    L(k) &= (1-p)(2g^{k+\epsilon} + R(k+1)) + (1-p)p(2g^{k+1+\epsilon} + R(k+2)) + \ldots \text{ up to infinity} \\
           &= (1-p)\sum_{i=0}^{t-1-k}(2g^{k+\epsilon+i} + R(k+1+i)) + (1-p)\sum_{i=t-k}^{\infty}p^i (2g^{k+\epsilon+i} + R(k+1+i)) \\
           &= (1-p)2g^{k+\epsilon}\sum_{i=0}^{\infty}(pg)^i + (1-p)\sum_{i=0}^{t-k-1}R(k+1+i) + (1-p)\sum_{i=t-k}^{\infty}p^i n.
\end{align*}
Hence we have
\begin{equation}\label{eq: expected time left}
    L(k) = (1-p)2g^{k+\epsilon}\sum_{i=0}^{\infty}(pg)^i + (1-p)\sum_{i=0}^{t-k-1}R(k+1+i) + (1-p)\sum_{i=t-k}^{\infty}p^i n.
\end{equation}
Next we set, for $k=0,\ldots,t$,
\begin{align}
    \alpha_k :=& np^{t+1-k} + (1-p)2g^{k+\epsilon}\sum_{i=0}^{t-k}(pg)^i + (1-p)p^{t-k}L(t+1) \notag\\
              =& np^{t+1-k} - np^{t+1-k} + np^{t-k} + \frac{2g^{t+1+\epsilon}}{1-pg}(1-p)^2p^{t-k} 
    + \frac{2(1-p)g^{k+\epsilon}(1-(pg)^{t-k+1})}{1-pg} \notag\\
              =& \frac{2g^{t+1+\epsilon}(1-p)^2p^{t-k}}{1-pg} + \frac{2g^{k+\epsilon}(1-p)}{1-pg}
    - \frac{2g^{k+\epsilon}(1-p)(pg)^{t-k+1}}{1-pg} + np^{t-k} \notag\\
              =& \frac{2(1-p)g^{k+\epsilon}}{1-pg}\left((1-p)p^{t-k}g^{t+1-k}+1-(pg)^{t-k+1}\right) + np^{t-k} \notag\\
              =& \frac{2(1-p)g^{k+\epsilon}}{1-pg}\left(1+g(pg)^{t-k}-2(pg)^{t-k+1}\right) + np^{t-k} \notag\\
              =& \frac{2(1-p)g^{k+\epsilon}}{1-pg}\left(1+(1-2p)g(pg)^{t-k}\right) + np^{t-k}\label{eq: alpha plus epsilon} \\
    \beta_k :=& (1-p)2g^{k+\epsilon} \sum_{i=0}^\infty (pg)^i 
    + (1-p) \sum_{i=t-k}^\infty p^i n \notag \\
             =& \frac{2 (1-p) g^{k+\epsilon}}{1-pg}+n p^{t-k}, \label{eq: beta plus epsilon}
\end{align}
as well as we also set $\alpha^T=(\alpha_0, \ldots, \alpha_t), \beta^T=(\beta_0, \ldots, \beta_t)$. 

With that notation in mind and from equations \eqref{eq: expected time right}, \eqref{eq: expected time left}, \eqref{eq: alpha plus epsilon}, and \eqref{eq: beta plus epsilon}, we have that $R(0), \ldots, R(t)$ and $L(0), \ldots, L(t)$ satisfy the following linear system of $2(t+1)$ constraints
\begin{align}
R(k) - (1-p)
\sum_{i=0}^{t-k-1}p^i L(k+i+1)
=& \alpha_k, ~~k=0,\ldots,t, \label{eq: right for left}\\
L(k) - (1-p)
\sum_{i=0}^{t-k-1} p^i R(k+1+i)  
=& \beta_k, ~~k=0,\ldots,t. \label{eq: left for right}
\end{align}
The system has the same variable coefficient as the one considered on page~\pageref{equa: R for L} in the proof of Lemma~\ref{lem: cr deterministic in interval} in Section~\ref{sec: proof of lem: cr deterministic in interval}.
Since we solved the previous system by inverting the coefficient matrix, we use the same inverse, and elementary algebraic calculations (a matrix multiplication) to find 
\begin{equation} \label{eq:r zero smaller epsilon}
    R(0) = n + \frac{g^\epsilon (-1 + p) (-2 + 
    g (2 p + 
       g^t (1 + g - 2 g p - (-1 + g) (-1 + 2 p)^{1 + t})))}{(-1 + 
    g) (-1 + g p)}
\end{equation}
If we set $\epsilon=0$ in \eqref{eq:r zero smaller epsilon} we actually get the same result as in \eqref{eq:r zero}.
Since this is the case of $\epsilon < \delta$ in order to compute the expected termination time given that $\epsilon < \delta$ we have to compute
$$\int_{0}^{\delta} R_{\epsilon < \delta}(0) \, d\epsilon$$
which gives us
\begin{equation} \label{eq:integral of r zero smaller epsilon}
    n \delta + \frac{(-1 + g^\delta) (-1 + p) (-2 + 
    g (2 p + 
       g^t (1 + g - 2 g p - (-1 + g) (-1 + 2 p)^{(1 + t)})))}{(-1 + 
    g) (-1 + g p) \ln g}.
\end{equation}

\subsection{Second case ($\epsilon \geq 0$)}
\label{sec: epsilon >= delta}
Next we analyze the case where $\epsilon \geq \delta$. 
Again for notational convenience, only in this section we abbreviate 
$L_{\epsilon \geq \delta}$  and $R_{\epsilon \geq \delta}$ by $L,R$, respectively.
We compute $L(x), R(x)$ by establishing recurrence relations as before. 

\begin{align*}
    R(t) &= n, 
     ~~~~\textit{(Since intended turning point is } g^{t+\epsilon} > g^{t+\delta} = n \textrm{)} \\
        R(t-1) &= pR(t) + (1-p)(2g^{t-1+\epsilon} + L(t)) \\
           &= pn + (1-p)(2g^{t-1+\epsilon} + L(t)) \\
    R(t-2) &= pR(t-1) + (1-p)(2g^{t-2+\epsilon} + L(t-1)) \\
           &= p^2n + (1-p)p(2g^{t-1+\epsilon} + L(t)) + (1-p)(2g^{t-2+\epsilon} + L(t-1))
\end{align*}
Similarly, for any $x \geq t$ we have that $R(x)=n$.

Note that the current analysis is nearly identical to the previous case; indeed, the difference is that now the target is reached one step earlier. Therefore, the expected termination time moving to the right in the general case of $x=k$, for some $k<t$, is
$$R(k) = pR(k+1) + (1-p)(2g^{k+\epsilon} + L(k+1)).$$
By expanding this recursive expression we get the following recursive expression that depends only on $L(x)$.
\begin{equation} \label{eq: expected time right second case}
    R(k) = np^{t-k} + (1-p)2g^{k+\epsilon}\sum_{i=0}^{t-1-k}(pg)^i + (1-p)\sum_{i=0}^{t-1-k}p^iL(k+1+i)
\end{equation}

Similarly we compute $L(x)$ for some useful values of $x$. We have
\begin{align*}
    L(t) &= (1-p)(2g^{t+\epsilon} + R(t+1)) + (1-p)p(2g^{t+1+\epsilon} + R(t+2)) + \ldots \\
         &= (1-p)2g^{t+\epsilon}\sum_{i=0}^{\infty}(pg)^i + (1-p)\sum_{i=0}^{\infty}p^i R(t+1+i) \\
         &= \frac{(1-p)2g^{t+\epsilon}}{1-pg} + (1-p)\sum_{i=0}^{\infty}p^i n \\
         &= \frac{(1-p)2g^{t+\epsilon}}{1-pg} + n. 
\end{align*}
Similarly we have
\begin{align*}
    L(t-1) &= (1-p)(2g^{t-1+\epsilon} + R(t)) + (1-p)p(2g^{t+\epsilon} + R(t+1)) + \\
           &= (1-p)2g^{t-1+\epsilon}\sum_{i=0}^{\infty}(pg)^i + (1-p)\sum_{i=0}^{\infty}p^i R(t+i). 
\end{align*}
From the above we derive a general expression of $L(k)$ for some $k<t+1$ as follows
\begin{align*}
    L(k) &= (1-p)(2g^{k+\epsilon} + R(k+1)) + (1-p)p(2g^{k+1+\epsilon} + R(k+2)) + \ldots \\
           &= (1-p)\sum_{i=0}^{t-2-k}(2g^{k+\epsilon+i} + R(k+1+i)) + (1-p)\sum_{i=t-1-k}^{\infty}p^i (2g^{k+\epsilon+i} + R(k+1+i)) \\
           &= (1-p)2g^{k+\epsilon}\sum_{i=0}^{\infty}(pg)^i + (1-p)\sum_{i=0}^{t-2-k}R(k+1+i) + (1-p)\sum_{i=t-1-k}^{\infty}p^i n. 
\end{align*}
Hence we conclude that 
\begin{equation}\label{eq: expected time left second case}
    L(k) = (1-p)2g^{k+\epsilon}\sum_{i=0}^{\infty}(pg)^i + (1-p)\sum_{i=0}^{t-2-k}R(k+1+i) + (1-p)\sum_{i=t-1-k}^{\infty}p^i n.
\end{equation}
Next we set, for $k=0,\ldots,t$,
\begin{align}
    \alpha_k :=& np^{t-k} + (1-p)2g^{k+\epsilon}\sum_{i=0}^{t-1-k}(pg)^i + (1-p)p^{t-1-k}L(t) \notag\\
              =& np^{t-k} - np^{t-k} + np^{t-1-k} + \frac{(1-p)2g^{k+\epsilon}(1-(pg)^{t-k})}{1-pg} + \frac{2g^{t+\epsilon}(1-p)^2p^{t-1-k}}{1-pg} \notag\\
              =& \frac{2g^{t+\epsilon}(1-p)^2p^{t-1-k}}{1-pg} + \frac{(1-p)2g^{k+\epsilon}}{1-pg}(1-(pg)^{t-k}) + np^{t-1-k} \notag\\
              =& \frac{2g^{k+\epsilon}(1-p)}{1-pg}\left((1-p)p^{t-1-k}g^{t-k}+1-(pg)^{t-k}\right) + np^{t-1-k} \notag\\
              =& \frac{2g^{k+\epsilon}(1-p)}{1-pg}\left(g(pg)^{t-1-k}-(pg)^{t-k}+1-(pg)^{t-k}\right) + np^{t-1-k} \notag\\
              =& \frac{2(1-p)g^{k+\epsilon}}{1-pg}\left(1+(1-2p)g(pg)^{t-1-k}\right) + np^{t-1-k}\label{eq: alpha plus epsilon second case} \\
    \beta_k :=& (1-p)2g^{k+\epsilon} \sum_{i=0}^\infty (pg)^i 
    + (1-p) \sum_{i=t-1-k}^\infty p^i n \notag \\
             =& \frac{2 (1-p) g^{k+\epsilon}}{1-pg}+n p^{t-1-k}, \label{eq: beta plus epsilon second case}
\end{align}
as well as we also set $\alpha^T=(\alpha_0, \ldots, \alpha_t), \beta^T=(\beta_0, \ldots, \beta_t)$.

As before from equations~\eqref{eq: expected time right second case}, \ref{eq: expected time left second case}, \ref{eq: alpha plus epsilon second case}, and \ref{eq: beta plus epsilon second case}, we have that $R(0), \ldots, R(t-1)$ and $L(0), \ldots, L(t-1)$ satisfy the following linear system of $2t$ constraints
\begin{align}
R(k) - (1-p)
\sum_{i=0}^{t-2-k}p^i L(k+1+i)
=& \alpha_k, ~~k=0,\ldots,t-1, \label{eq: right for left second case}\\
L(k) - (1-p)
\sum_{i=0}^{t-2-k} p^i R(k+1+i)  
=& \beta_k, ~~k=0,\ldots,t-1. \label{eq: left for right second case}
\end{align}
Solving the system we compute $R(0)$ as follows.
\begin{equation} \label{eq:r zero bigger epsilon}
    R(0) = n + \frac{g^\epsilon (-1 + p) (-2 + 
    g (2 p + 
       g^{t-1} (1 + g - 2 g p - (-1 + g) (-1 + 2 p)^{t})))}{(-1 + 
    g) (-1 + g p)}
\end{equation}
If we replace $t$ with $t-1$ in \eqref{eq:r zero smaller epsilon} we derive \eqref{eq:r zero bigger epsilon}. Additionally, because this is the case of $\epsilon \geq \delta$,  we compute
$$\int_{\delta}^{1} R(0) \, d\epsilon$$
which simplifies to
\begin{equation} \label{eq:integral of r zero bigger epsilon}
    n - n \delta + \frac{(g - g^\delta) (-1 + p) (-2 + g^t + g^{1 + t} + 
    2 g p - 2 g^{1 + t} p - (-1 + g) g^t (-1 + 2 p)^t)}{(-1 + 
    g) (-1 + g p) \ln g}.
\end{equation}

\subsection{Combining the two cases}

Now we are ready to combine the two cases in order to compute the total expected termination time. More specifically the we have that
$$R(0) = \int_{0}^{\delta} R_{\epsilon < \delta}(0) + \int_{\delta}^{1} R_{\epsilon \geq \delta}(0).$$
Using \eqref{eq:integral of r zero smaller epsilon} and \eqref{eq:integral of r zero bigger epsilon}, we obtain 
\begin{multline}
    n \left(1 + \frac{(1 - p) (-1 + g (-1 + 2 p)) (1 + (-1 + 2 p)^t)}{(-1 + 
    g p) \ln g}\right) 
    - \frac{2 (-1 + p)}{(-1 + g p) \ln g} 
    + \frac{ 2 g (-1 + p) (p + g^t (-1 + p) (-1 + 2 p)^t)}{(-1 + g p) \ln g}.
\end{multline}

Scaling by the target distance $n$ gives the promised competitive ratio
\begin{multline}
    1 + \frac{(1 - p) (-1 + g (-1 + 2 p)) (1 + (-1 + 2 p)^t)}{(-1 + 
    g p) \ln g} 
    - \frac{2 (-1 + p)}{n(-1 + g p) \ln g} 
    + \frac{ 2 g (-1 + p) (p + g^t (-1 + p) (-1 + 2 p)^t)}{n(-1 + g p) \ln g}
\end{multline}
where $n \in (g^t, g^{t+1}]$.

\ignore{
Moreover, if we require that the target is placed arbitrarily far from the origin, which actually implies that $t \to \infty$ we compute the following limits
\begin{align*}
    &\lim_{t \to \infty} - \frac{2 (-1 + p)}{g^{t+\delta}(-1 + g p) \ln g} = 0 \\
    &\lim_{t \to \infty} \frac{ 2 g (-1 + p) (p + g^t (-1 + p) (-1 + 2 p)^t)}{g^{t+\delta}(-1 + g p) \ln g} = 0 \\
    &\lim_{t \to \infty} \frac{(1 - p) (-1 + g (-1 + 2 p)) (1 + (-1 + 2 p)^t)}{(-1 + 
    g p) \ln g} = \frac{(-1 + p) (-1 + g (-1 + 2 p))}{(1 - g p) \ln g}
\end{align*}
which finally gives us the competitive ratio of
\begin{equation}
    1 + \frac{(-1 + p) (-1 + g (-1 + 2 p))}{(1 - g p) \ln g}
\end{equation}
}

\section{Proof of Theorem~\ref{thm: 2 agents improved competitive}}
\label{sec: proof of thm: 2 agents improved competitive}

Here we give all details for the proof of Theorem~\ref{thm: 2 agents improved competitive}.

\begin{proof}[of Theorem~\ref{thm: 2 agents improved competitive}]

For the sake of exposition, we present again Algorithm~\ref{algo: search with two wireless} and proceed with its analysis. 

\textit{Phase 1:} First the two $p$-faulty agents start searching in opposite direction at full speed until the target is found (since this is the initial move, there can be no mistake in choosing the direction). Without loss of generality we assume that the target is reported by the finder at time $1$, when it is also communicated to the non-finder instantaneously. 

\textit{Phase 2:} At this moment the non-finder attempts to turn (possibly without success), and switches its speed to $s=s(p)<1$ that will be determined later. 
Now we turn our attention to the target finder. The finder becomes the Leader, and chooses $\gamma=\gamma(p)>0$. Consequently, the finder passes the target for time $\gamma$, and then runs Algorithm~\ref{algo:subroutine change direction} with the help of the target playing the role of the immobile Follower, until the target is reached again. 

\textit{Phase 3:} The finder continues in the same direction until it meets the non-finder. Note that this meeting is realized with probability $1$ independently of whether the non-finder turned successfully or not, since the non-finder's speed is now $s<1$. 

\textit{Phase 4:} Once the meeting is realized, the non-finder assumes the role of the Follower, and the finder remains the Leader. The Follower stays put, and the Leader continues moving in the same directoin (away from the target and the Follower) for time $\gamma$ and then runs Algorithm~\ref{algo:subroutine change direction} with the Follower being the immobile agent, in order to turn. 

\textit{Phase 5:} Once the turn is realized, the two $p$-faulty agents move to the target. 

Next we compute the expected termination time.
Denote by $W$ the random variable that equals the time between two consecutive visitations by the finder, as part of Phase 2. 
Let also $T$ be the random variable that equals the time that the two agents meet for the second time, as part of Phase 4. 


Back to Phase 2, we look at the independent event concerning the non-finder. 
On the one hand with probability $1-p$ the non-finder successfully turns, so when the finder reaches the target for the second time and starts moving towards the non-finder, the two are at distance $2-sW$ and have relative speed $1+s$. This means that they meet in additional time $(2-sW)/(1+s)$, during which time the non-finder has moved closer to the target by $s(2-sW)/(1+s)$. Hence, when the two agents meet, they are at distance $2-sW-s(2-sW)/(1+s)$ from the target. 

On the other hand with probability $p$ the non-finder fails to turn, so when the finder reaches the target for the second time and starts moving towards the non-finder, the two are at distance $2+sW$ and have relative speed $1-s$ (because the move towards the same direction). This means that they meet in additional time $(2+sW)/(1-s)$, during which time the non-finder has moved further from the target by $s(2+sW)/(1-s)$. Hence, when the two agents meet, they are at distance $2+sW+s(2+sW)/(1-s)$ from the target. 

But then, by the Linearity of expectation, the termination time (which is equal to the competitive ratio) equals
\begin{align*}
& 1+ \ex[W] + \ex[T]
+
p 
\left(
\ex[(2+sW)/(1-s) + 2+sW+s(2+sW)/(1-s)]
\right)\\
&~~+
(1-p) 
\left(
\ex[(2-sW)/(1+s) + 2-sW-s(2-sW)/(1+s)]
\right)
\\
=&
\frac{5-s (s+4-8 p)}{1-s^2}
+\ex[T]
+
\frac{s (4 p+s-2)+1}{1-s^2}
\ex[W]
\end{align*}

We choose $s=s(p) = \frac{1-2 \sqrt{p-p^2}}{1-2 p}$ that can be easily seen to attain values in $(0,1)$ for all $p \in (0,1/2)$, hence it is a valid choice for a speed.\footnote{The value of $s$ is chosen as it corresponds to the minimizer of $\frac{5-s (s+4-8 p)}{1-s^2}$ over all parameters $0<p<1/2$.}
Now we substitute back to the formula of the competitive ratio, and after we simplify algebraically, the expression becomes
$$
3+4 \sqrt{(1-p) p} + \ex[T] + 2\sqrt{(1-p) p}\ex[W].
$$
The main claim follows by remembering that both $\ex[T],\ex[W]$ are proportional to $\gamma$, and hence can be made arbitrarily small.
\qed \end{proof}

\end{document}